\definecolor{plum}  {rgb}{.4,0,.4}
\definecolor{BrickRed} {rgb}{0.6,0,0}
\def\deq{\triangleq}
\def\R{\mathbb{R}}
\def\N{\mathbb{N}}
\def\eps{\varepsilon}
\def\E{\mathbb{E}}
\def\Pr{{\mathbb P}}
\def\emp{{\mathsf P}}
\def\tvar#1{\| #1 \|_{{\rm TV}}}
\def\dvar#1#2{\tvar{#1-#2}}
\def\cA{{\cal A}}
\def\cB{{\cal B}}
\def\cE{{\cal E}}
\def\cF{{\cal F}}
\def\cG{{\cal G}}
\def\cP{{\cal P}}
\def\cS{{\cal S}}
\def\cT{{\cal T}}
\def\cU{{\mathsf U}}
\def\cV{{\mathsf V}}
\def\cW{{\mathsf W}}
\def\cX{{\mathsf X}}
\def\cY{{\mathsf Y}}
\def\cZ{{\mathsf Z}}
\def\wh#1{\hat{#1}}
\def\td#1{\tilde{#1}}
\newtheorem{example}{Example}
\newtheorem{theorem}{Theorem}
\newtheorem{definition}{Definition}
\newtheorem{proposition}{Proposition}
\newtheorem{remark}{Remark}
\newtheorem{lemma}{Lemma}
\newtheorem{fact}{Fact}
\begin{document}

\title{Empirical Processes, Typical Sequences\\
and Coordinated Actions in Standard Borel Spaces}

\author{Maxim Raginsky,~\IEEEmembership{Member,~IEEE}
\thanks{A preliminary version of this work was presented at the IEEE International Symposium on Information Theory, Austin, TX, July 2010.}
\thanks{M.~Raginsky was with the Department of Electrical and Computer Engineering, Duke University, Durham, NC 27708, USA. He is now with the Department of Electrical and Computer Engineering and the Coordinated Science Laboratory, University of Illinois, Urbana, IL 61801, USA. E-mail: maxim@illinois.edu.}}

\maketitle
\thispagestyle{empty}


\begin{abstract}
This paper proposes a new notion of typical sequences on a wide class of abstract alphabets (so-called standard Borel spaces), which is based on approximations of memoryless sources by empirical distributions uniformly over a class of measurable ``test functions." In the finite-alphabet case, we can take all uniformly bounded functions and recover the usual notion of strong typicality (or typicality under the total variation distance). For a general alphabet, however, this function class turns out to be too large, and must be restricted. With this in mind, we define typicality with respect to any Glivenko--Cantelli function class (i.e., a function class that admits a Uniform Law of Large Numbers) and demonstrate its power by giving simple derivations of the fundamental limits on the achievable rates in several source coding scenarios, in which the relevant operational criteria pertain to reproducing empirical averages of a general-alphabet stationary memoryless source with respect to a suitable function class.\\ \\
\begin{IEEEkeywords}Coordination via communication, empirical processes, Glivenko--Cantelli classes, rate distortion, source coding, standard Borel spaces, typical sequences, uniform laws of large numbers.
\end{IEEEkeywords}
\end{abstract}

\section{Introduction}
\label{sec:intro}

\PARstart{T}{he} notion of {\em typical sequence} has been central to information theory since Shannon's original paper \cite{Shannon48}. For finite alphabets, it leads to simple and intuitive proofs of achievability in a wide variety of source and channel coding settings, including multiterminal scenarios \cite{CovTho06}. Another appealing aspect of typical sequences is that they provide a language for {\em approximation} of information sources in total variation distance using finite communication resources. Recent work of Cuff et al.~\cite{CufPerCov09} on coordination via communication serves as a particularly striking example of the power of this language. 

For abstract alphabets, however, most of this power is lost; while such  results as the asymptotic equipartition property carry over \cite{Bar85}, in most other situations, particularly involving lossy codes, one has to resort to ergodic theory \cite{Gra90a} or large deviations theory \cite{DemZei98}. Direct approximation of abstract memoryless sources in total variation using empirical distributions is, in general, impossible (cf.~Sec.~\ref{sec:general_typicality} for details). However, it is precisely this direct approximation that renders typicality-based proofs of achievability so transparent.

The present paper makes two contributions. First, we propose a way to revise the notion of typicality for {\em general} alphabets (more specifically, standard Borel spaces \cite{Preston,Gra09}), allowing for similarly transparent achievability arguments. When two probability measures are close in total variation, the corresponding expectations of {\em any} bounded measurable function are also close. For general alphabets, when one of the measures is discrete, this is too much to ask. Instead, we advocate an approach based on suitably {\em restricting} the class of functions on which we would like to match statistical expectations with sample (empirical) averages. Provided the Law of Large Numbers holds {\em uniformly} over the restricted function class, we can speak of typical sequences {\em with respect to this class} and develop typicality-based achievability arguments in close parallel to the finite-alphabet case. The central object of study is the {\em empirical process} \cite{Pol84,WaaWel96,vanDeGeer00} indexed by the function class, which gives information on the deviation of empirical means from statistical means for a given realization of the source under consideration, and the total variation distance is replaced by the supremum norm of this empirical process.

The second contribution consists of applying our new notion of typicality to several source coding problems which, following the terminology of \cite{CufPerCov09}, can be thought of as ``empirical coordination'' of actions in a two-node network. Roughly speaking, the objective is to use communication resources in order to reproduce (or approximate) the empirical distribution of a given source sequence, rather than the sequence itself, with or without side information. This coordination viewpoint suggests a new operational framework suitable for problems involving distributed learning, control, and sensing.

\subsection{Preview of the results}

\begin{figure}
	\centerline{\includegraphics[width=0.8\columnwidth]{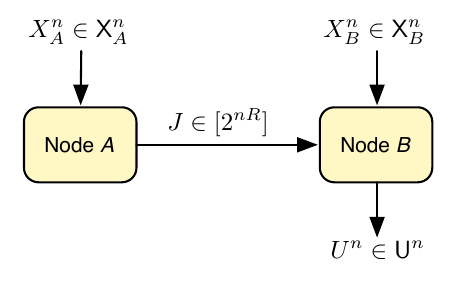}}
	\caption{\label{fig:generic} Empirical coordination of actions in a two-node network. Node $A$ (resp., $B$) observes a random $n$-tuple $X^n_A$ (resp., $X^n_B$), where $(X_{A,1},X_{B,1}),\ldots,(X_{A,n},X_{B,n})$ are i.i.d.\ pairs of correlated random variables. A message is sent from Node $A$ to Node $B$ at rate $R$ to specify the $n$-tuple $U^n$.}
\end{figure}

Consider the two-node network shown in Figure~\ref{fig:generic}. There is an alphabet $\cX_A$ associated with Node $A$, and two alphabets, $\cX_B$ and $\cU$, associated with Node $B$. Initially, Node $A$ (resp., Node $B$) observes a random $n$-tuple $X^n_A \in \cX^n_A$ (resp., $X^n_B \in \cX^n_B$), where the pairs $(X_{A,1},X_{B,1}),\ldots,(X_{A,n},X_{B,n})$ are i.i.d.\ draws from some specified probability law $P_{X_A X_B}$ on $\cX_A \times \cX_B$. We also have a target conditional probability law $P_{U|X_A}$ on $\cU$ given $X_A$. Node $A$, given its knowledge of $X^n_A$, $P_{X_A X_B}$, and $P_{U|X_A}$, communicates some information $J$ to Node $B$ at rate $R$. The latter receives $J$ and, using its knowledge of $X^n_B$, $P_{X_A X_B}$, and $P_{U|X_A}$, generates an $n$-tuple $U^n \in \cU^n$.

Now imagine that there is an external observer with access to $X^n_A$ and $U^n$, who also knows $P_{X_A}$ and $P_{U|X_A}$. This observer has a collection $\cF$ of ``test functions'' $f : \cX_A \times \cU \to [-1,1]$ and can compute the {\em empirical expectation} (or sample average) $n^{-1}\sum^n_{i=1}f(X_{A,i},U_i)$ and the ``true'' expectation $\E f(X_A,U)$ w.r.t.\ the joint law $P_{X_A U} = P_{X_A} \otimes P_{U|X_A}$ for any $f \in \cF$. We assume that Nodes $A$ and $B$ know $\cF$, but do not know which $f \in \cF$ the observer will pick. The objective is then to minimize the expected worst-case deviation between the empirical expectations and the true expectations:
\begin{align*}
	\text{minimize }\,\,\, \E \sup_{f \in \cF} \left| \frac{1}{n}\sum^n_{i=1}f(X_{A,i},U_i) - \E f(X_A,U)\right|
\end{align*}
over all admissible encoding and decoding strategies given the rate constraint $R$ and the information patterns at the two nodes (i.e., which node knows what). In other words, the goal is to ensure that, from the observer's viewpoint, the empirical distribution of $\{(X_{A,i},U_i)\}^n_{i=1}$ is as close as possible to the target distribution $P_{X_A  U}$ in the sense that the corresponding expectations of all $f \in \cF$ are as close as possible, {\em uniformly} over $\cF$. Operational criteria of this kind arise, e.g., in the context of statistical learning from random samples \cite{BueKum96,RaginskyITA09}, where the functions in $\cF$ may be viewed as the losses of various candidate predictors of $U$ given $X_A$.

In this paper, we consider two special cases of this set-up:
\begin{enumerate}
	\item Given two alphabets $\cX$ and $\cY$, we take $\cX_A = \cX$, $\cX_B = \varnothing$, $\cU = \cY$. This is a generalization of the basic two-node empirical coordination problem \cite[Section~III.C]{CufPerCov09} to abstract alphabets. The work of \cite{CufPerCov09} is, in turn, related to the problem of communication of probability distributions \cite{KraSav07}. (A related problem, though with a slightly different operational criterion, is lossy source coding with respect to a family of distortion measures \cite{DemWei03}.) 
	\item We have $\cX_A$ and $\cU$ as above, and also $\cX_B = \cZ$, where $\cZ$ is some third alphabet. This is a generalization of the problem stated in 1), but now we also allow side information at the decoder.
\end{enumerate}

Our achievability results hinge on the assumption that the function class $\cF$ admits the {\em Uniform Law of Large Numbers (ULLN)}. Given an abstract alphabet $\cZ$, we say that a class $\cF$ of functions $f : \cZ \to [-1,1]$ admits the ULLN if the following holds: for any i.i.d.\ random process $\{Z_i\}^\infty_{i=1}$ over $\cZ$, we have
\begin{align*}
	\sup_{f \in \cF} \left| \frac{1}{n} \sum^n_{i=1} f(Z_i) - \E f(Z_1) \right| \xrightarrow{n \to \infty} 0, \quad \text{a.s.}
\end{align*}
The quantity inside the $|\cdot |$ is referred to as the {\em empirical process} associated with $Z^n$, and describes the fluctuations of the sample mean of each $f$ around its expectation. We define an $n$-tuple $z^n = (z_1,\ldots,z_n) \in \cZ^n$ to be {\em $\eps$-typical} w.r.t.\ $\cF$ for a probability law $P$ if
\begin{align*}
	\sup_{f \in \cF} \left| \frac{1}{n}\sum^n_{i=1}f(z_i) - \E_P f(Z) \right| < \eps.
\end{align*}
Turning now to the set-up of Figure~\ref{fig:generic}, let us assume that the observer's function class $\cF$ satisfies the ULLN. Then a simple achievability argument exploits the fact (which we prove under mild regularity conditions) that, for any probability law $Q = Q_{X_A X_B U}$ under which $X_B \to X_A \to U$ is a Markov chain, there exist a rate-$R$ encoding $\wh{U}^n(X^n_A)$ from $\cX^n_A$ into $\cU^n$ and a deterministic mapping $g$ from $\cX^n_B$ into $\cX^n_A$, such that the tuple $(g(X_B^n),\wh{U}^n)$ is $\eps$-typical w.r.t.\ $\cF$ for $Q$, provided $R > I(X_A; U | X_B)$. When $\cX_B = \varnothing$, we simply apply the above argument to ``degenerate'' Markov chains of the form $X_A \to X_A \to U$, where the rate condition becomes $R > I(X_A;U)$.

We list the salient features of our approach:
\begin{itemize}
	\item When the underlying alphabet $\cZ$ is finite, the ULLN is satisfied by the class of {\em all} functions $f : \cZ \to [-1,1]$, and our definition of typicality reduces to strong typicality \cite{CufPerCov09,CovTho06}.
	\item When $\cZ$ is a complete separable metric space, the ULLN is satisfied by the class of all Lipschitz functions $f : \cZ \to [-1,1]$ with $\| f \|_\infty \le 1$ and Lipschitz constant bounded by $1$. Moreover, the ULLN in this case is equivalent to almost sure weak convergence of empirical distributions (Varadarajan's theorem \cite[Theorem~11.4.1]{Dud02}).
	\item In general, there is a veritable plethora of function classes satisfying the ULLN (we present several examples in Section~\ref{ssec:GC_examples}). For instance, when $\cZ = \R^d$, the ULLN is satisfied by the indicator functions of all halfspaces, balls, or rectangles (and of finite unions or intersections thereof). One example, particularly relevant in source coding, is the collection of indicator functions of Voronoi cells induced by an arbitrary set of $m$ points in $\R^d$, for any fixed $m$ --- indeed, any such cell is an intersection of $O(m)$ halfspaces. Hence, our results apply to the setting where $\cX_A \times \cU \subseteq \R^d$ and each $(X_{A,i},U_i)$ is observed through an $m$-point nearest-neighbor quantizer.
\end{itemize}

\subsection{Related work}

The focus of the present paper is exclusively on source coding. However, a recent preprint of Mitran \cite{Mitran} uses weak convergence to develop an extension of typical sequences to Polish alphabets and then applies that definition to several channel coding problems, including an achievability result for Gel'fand--Pinsker channels \cite{GelPin80} with input cost constraints. What distinguishes Mitran's work from ours is his careful use of several equivalent characterizations of weak convergence via the portmanteau theorem \cite[Theorem~11.1.1]{Dud02}. In particular, his approach requires an explicit construction of a countable generating set for the underlying Borel $\sigma$-algebra that consists of the continuity sets of the probability law of interest. As a consequence, he is able to establish a generalization of the Markov lemma \cite{Ber78,Kra07}, which in turn allows him to use binning just like in the finite-alphabet case. By contrast, our notion of typicality is considerably broader (and, in fact, contains the one based on weak convergence as a special case), but, since we do not make any major structural assumptions beyond those needed for the ULLN, we cannot establish anything as strong as the Markov lemma. However, our proof technique does not rely on the Markov lemma in its strong form, and is more in the spirit of Wyner and Ziv \cite{Wyn75,WynZiv76,Wyn78}.

We also note that a restricted notion of typicality based on weak convergence was used by Kontoyiannis and Zamir \cite{KonZam06} in the context of universal vector quantization using entropy codes. The idea there is to consider sequences of increasing length, whose empirical distributions converge in the weak topology to the output distribution of an optimal test channel in a Shannon rate-distortion problem.

\subsection{Contents of the paper}

The remainder of the paper is organized as follows. Section~\ref{sec:prelims} sets up the notation and lists the preliminaries. In Section~\ref{sec:GC} we formally define function classes that satisfy the ULLN and give several examples. Then, in Section~\ref{sec:general_typicality} we motivate and formally describe our approach to typicality and establish a number of key properties, including a lemma on the preservation of typicality in a Markov structure. Next, in Section~\ref{sec:example_apps}, using this lemma as the main technical tool, we illustrate the power of the proposed new approach by proving three theorems concerning fundamental limits on minimal achievable rates for (i) two-node empirical coordination; (ii) two-node empirical coordination with side information at the decoder; and (iii) lossy source coding under a family of distortion measures. Although these results apply to general (uncountably infinite) alphabets, the proofs are as intuitive and simple as in the finite-alphabet scenario. We follow up with some concluding remarks in Section~\ref{sec:conclusion}. Lengthy proofs and discussions of auxiliary technical results are relegated to the Appendices.

\section{Preliminaries and notation}
\label{sec:prelims}

All spaces in this paper are assumed to be {\em standard Borel spaces} (for detailed treatments, see the lecture notes of Preston \cite{Preston} or Chapter~4 of Gray \cite{Gra09}):

\begin{definition} A measurable space $(\cZ,\cB_\cZ)$ is {\em standard Borel} if it can be metrized with a metric $d$ such that (1) $(\cZ,d)$ is a complete separable metric space, and (2) $\cB_\cZ$ coincides with the Borel $\sigma$-algebra of $(\cZ,d)$ (the smallest $\sigma$-algebra containing all open sets).
\end{definition}

\begin{remark} {\em A Polish space (i.e.,~a separable topological space whose topology can be metrized with a complete metric) is automatically standard Borel. In fact, the most general known class of standard Borel spaces consists of Borel subsets of Polish spaces \cite[Theorem~4.3]{Gra09}.}
\end{remark}

From now on, when dealing with a (standard Borel) space $\cZ$, we will often not mention its Borel $\sigma$-algebra explicitly. In particular, we will tacitly assume that all probability measures on $\cZ$ are defined w.r.t.\ $\cB_\cZ$. The main objects associated with $\cZ$ that are of interest to us are as follows:
\begin{itemize}
	\item $\cP(\cZ)$ is the space of all probability measures on $\cZ$
	\item $M(\cZ)$ is the space of all measurable functions $f : \cZ \to \R$
	\item $M^b(\cZ) \subset M(\cZ)$ is the normed space of all bounded measurable functions $f : \cZ \to \R$ with the sup norm
	\begin{align*}
		\| f \|_\infty \deq \sup_{z \in \cZ} |f(z)|
	\end{align*}
	\item $M^{b,1}(\cZ) \deq \left\{ f \in M^b(\cZ) : \| f \|_\infty \le 1 \right\}$.
\end{itemize}
Other notation will be introduced as needed.

Standard Borel spaces possess just enough useful structure for our purposes. In particular, their $\sigma$-algebras are countably generated and contain all singletons. They also admit the existence of regular conditional distributions: If $\cZ = \cX \times \cY$ with the product $\sigma$-algebra, then the probability law $P \in \cP(\cZ)$ of any random couple $(X,Y) \in \cZ$ can be disintegrated as
\begin{align*}
	P(A \times B) = \int_A P_{Y|X}(B|x) P_X(dx),  \forall A \in \cB_\cX, B \in \cB_\cY
\end{align*}
where $P_X \in \cP(\cX)$ is the marginal distribution of $X$ and $P_{Y|X}(\cdot|\cdot) : \cB_\cY \times \cX \to [0,1]$ is a {\em Markov kernel}, i.e., $P_{Y|X}(\cdot|x) \in \cP(\cY)$ for all $x \in \cX$ and $P_{Y|X}(B|\cdot) \in M(\cX)$ for all $B \in \cB_\cY$. Given a random triple $(U,X,Y) \in \cU \times \cX \times \cY$ with joint law $P \in \cP(\cU \times \cX \times \cY)$, we will say that they form a {\em Markov chain} in that order (and write $U \to X \to Y$) if
\begin{align*}
	P_{U|XY}(A|x,y) = P_{U|X}(A|x), \qquad \forall A \in \cB_\cU
\end{align*}
for $P$-almost all $x,y$.

We will often use de Finetti's linear functional notation for expectations \cite[Section~1.4]{PollardUGMP}. That is, for any $P \in \cP(\cZ)$ and a $P$-integrable function $f: \cZ \to \R$,
\begin{align*}
P(f) \deq \E_P f(Z) \equiv \int_\cZ f dP,
\end{align*}
and we will extend this notation in an obvious way to integrals with respect to signed Borel measures on $\cZ$. Given a class $\cF$ of measurable functions $f \in M^{b,1}(\cZ)$, we can define a seminorm on the space of all signed measures on $\cZ$ via
\begin{align*}
\| \nu  \|_\cF \deq \sup_{f \in \cF} |\nu(f) |.
\end{align*}
As an example, $\| P - P' \|_{M^{b,1}(\cZ)}$ is precisely the {\em total variation distance}
\begin{align}\label{eq:dvar}
\dvar{P}{P'} \deq 2 \sup_{A \in \cB_\cZ} |P(A) - P'(A)|
\end{align}
between $P,P' \in \cP(\cZ)$.

We will make use of several standard information-theoretic definitions \cite{Gra90a}. The {\em divergence} between $P$ and $P'$ in $\cP(\cZ)$ is defined as
\begin{align*}
D(P \| P') \deq \begin{cases}
P \left( \log (dP/dP') \right), & \text{if } P \ll P' \\
+ \infty, & \text{otherwise}
\end{cases}
\end{align*}
Given a $Q \in \cP(\cX \times \cY)$, the {\em mutual information} between $X \in \cX$ and $Y \in \cY$ with joint law $Q$ is
\begin{align*}
	I(Q) \deq D(Q \| Q_X \otimes Q_Y),
\end{align*}
	where $Q_X \otimes Q_Y$ is the product of the marginals. Whenever $Q$ is clear from context, we will also write $I(X;Y)$ instead of $I(Q)$. We will use standard notation for such things as the conditional mutual information.

\section{Uniform Laws of Large Numbers and Glivenko--Cantelli classes}
\label{sec:GC}

Given an $n$-tuple $z^n = (z_1,\ldots,z_n) \in \cZ^n$, let us denote by $\emp_{z^n}$ the induced {\em empirical measure}:
\begin{align*}
\emp_{z^n} \deq \frac{1}{n}\sum^n_{i=1} \delta_{z_i},
\end{align*}
where $\delta_{z_i}$ is the Dirac measure concentrated at $z_i$ (since $\cB_\cZ$ contains all singletons, $\delta_z \in \cP(\cZ)$ for every $z \in \cZ$). If $\{Z_i\}^\infty_{i=1}$ is an i.i.d.\ sequence with common distribution $P \in \cP(\cZ)$,  then the Strong Law of Large Numbers says that, for any $f \in M^{b,1}(\cZ)$, the empirical means
\begin{align*}
	\emp_{Z^n}(f) = \frac{1}{n} \sum^n_{i=1} f(Z_i), \qquad n \in \N
\end{align*}
converge to the true mean $P(f)$ almost surely. By the union bound, this holds for any {\em finite} family of functions. In this paper, we consider {\em infinite} function classes that admit a Uniform Law of Large Numbers --- that is, absolute deviations between empirical and true means converge to zero {\em uniformly} over the function class. The canonical example of such a class appears in the celebrated Glivenko--Cantelli theorem \cite[Theorem~11.4.2]{Dud02}: Let $Z$ be a real-valued random variable with CDF $F_Z$, and let $\{Z_i\}^\infty_{i=1}$ be an infinite sequence of i.i.d.\ copies of $Z$. For each $n$, consider the {\em empirical CDF}
\begin{align*}
	{\mathsf F}_{Z^n}(z) \deq \frac{1}{n}\sum^n_{i=1}1_{\{ Z_i \le z\}}.
\end{align*}
The Glivenko--Cantelli theorem then says that
\begin{align*}
	\sup_{z \in \R} |{\mathsf F}_{Z^n}(z) - F_Z(z) | \xrightarrow{n \to \infty} 0 \qquad \text{a.s.}
\end{align*}
To cast it as a statement about a function class, consider
\begin{align*}
	\cF \deq \left\{ f_z = 1_{(-\infty,z]} : z \in \R \right\}.
\end{align*}
Then for any $z \in \R$,
\begin{align*}
	{\mathsf F}_{Z^n}(z) &= \emp_{Z^n}(f_z) \\
	F_Z(z) &= P_Z(f_z)
\end{align*}
and consequently
\begin{align*}
	\sup_{z \in \R} |{\mathsf F}_{Z^n}(z) - F_Z(z)| &= \| \emp_{Z^n} - P \|_\cF \xrightarrow{n \to \infty} 0 \quad \text{a.s.}
\end{align*}
This motivates the following definition \cite{Pol84,WaaWel96,vanDeGeer00}:

\begin{definition}
A class $\cF$ of measurable functions $f\in M^{b,1}(\cZ)$ is called {\em Glivenko--Cantelli}\footnote{Strictly speaking, the proper term is ``universal Glivenko--Cantelli,'' but we will follow standard usage and just say ``Glivenko--Cantelli.''} (or GC, for short) if
\begin{align}\label{eq:GC}
\| \emp_{Z^n} - P \|_\cF \xrightarrow{n \to \infty} 0 \qquad \text{a.s.}
\end{align}
for every  $P \in \cP(\cZ)$, where $\{Z_i\}^\infty_{i=1}$ is an i.i.d.\ random process with marginal distribution $P$. 
\end{definition}

\begin{remark}{\em In view of this definition, the classical Glivenko--Cantelli theorem can be paraphrased as follows: The class of all indicator functions of semi-infinite intervals of the form $(-\infty,z]$, $z \in \R$, is GC.}\end{remark}

\begin{remark}{\em The restriction to bounded functions is mostly needed for technical convenience and can be removed by means of suitable moment conditions and straightforward, though tedious, truncation arguments. A nice side benefit of the boundedness assumption, though, is that no loss of generality occurs if the almost sure convergence in \eqref{eq:GC} is replaced with convergence in probability \cite{Steele78,WaaWel96}.}\end{remark}

\begin{remark}\label{rem:measurability}{\em It should be borne in mind that when the function class $\cF$ is uncountable, $\| \emp_{Z^n} - P \|_\cF$ may not be a random variable (there is always a risk of spawning a nonmeasurable monster whenever one dabbles in uncountable operations). There are a number of ways to deal with such issues, as detailed in \cite[Appendix]{Pol84} or \cite[Section~2.3]{WaaWel96}. For our purposes, it will suffice to assume that $\cF$ is countable or ``nice'' in the sense that it contains a countable subset $\cG$ such that for every $f \in \cF$ there is a sequence $\{g_m\}$  in $\cG$ converging to $f$ pointwise. Then
	\begin{align*}
		\| \emp_{Z^n} - P \|_\cF = \| \emp_{Z^n} - P \|_\cG,
		\end{align*}
and the r.h.s.\ is a measurable function of $Z^n$ \cite[p.~110]{WaaWel96}.}
\end{remark}

Let $(\Omega,\cB,\Pr)$ be an underlying probability space for the random process $\{Z_i\}$. Then for each $n$ we can construct another random process on $(\Omega,\cB,\Pr)$, indexed by $\cF$:
\begin{align*}
\Delta^{(n)}_f(\omega) \deq \emp_{Z^n(\omega)}(f) - P(f), \qquad f \in \cF.
\end{align*}
This is an instance of an {\em empirical process} \cite{Pol84,WaaWel96,vanDeGeer00}, which is used to describe the fluctuations of the empirical means $\emp_{Z^n}(f)$ around the expectation $P(f)$. A GC class is one for which the $\ell^\infty(\cF)$ norms
\begin{align*}
	\big\| \Delta^{(n)}_f(\omega) \big\|_\cF = \sup_{f \in \cF} \big|\Delta^{(n)}_f(\omega)\big|
\end{align*}
of the empirical processes $\{\Delta^{(n)}_f \}_{f \in \cF}$, $n \ge 1$, converge to zero almost surely. 

\subsection{Examples of Glivenko--Cantelli classes}
\label{ssec:GC_examples}

We close this section by listing several examples of GC classes. Usually, whether or not a given class $\cF$ is GC depends on how ``large'' it is. The simplest notion of size is captured by the (metric) {\em entropy numbers} of $\cF$ \cite{KolTih61}. Given any $\eps >0$, the covering number $N(\eps,\cF,Q)$ of $\cF \subset M^{b,1}(\cZ)$ w.r.t.\ a probability measure $Q \in \cP(\cZ)$ is the minimal number of balls $\{ g : \| g - f \|_{L^1(Q)} \le \eps \}$, $f \in M^{b,1}(\cZ)$, of radius $\eps$ needed to cover $\cF$. The entropy number of $\cF$ is $\log N(\eps,\cF,Q)$. Then (under additional measurability assumptions, cf.~Remark~\ref{rem:measurability}) $\cF$ is GC if
\begin{align*}
\sup_{Q \in \cP(\cZ)} N(\eps,\cF,Q) < \infty, \qquad \forall \eps > 0.
\end{align*}
Other conditions for a class to be GC involve alternative notions of entropy, such as entropy with bracketing. Chapter~2 of van der Waart and Wellner \cite{WaaWel96} contains a detailed exposition of these matters. Examples~\ref{ex:VC}--\ref{ex:smooth} below follow \cite{WaaWel96}; Example~\ref{ex:BL} shows that the well-known theorem of Varadarajan on almost sure weak convergence of empirical measures can be stated in the form of a ULLN for an appropriate GC class.

\begin{example}[Vapnik--Chervonenkis classes]\label{ex:VC} {\em Given any collection $\cA \subset \cB_{\cZ}$ and any finite set $C \subset \cZ$, define
\begin{align*}
S(\cA, C) &\deq \left| \left\{ C \cap A : A \in \cA \right\} \right| \\
S_n(\cA) &\deq \max_{|C| \le n} S(\cA,C)
\end{align*}
and let $V(\cA) \deq \max \left\{ n \in \N : S_n(\cA) = 2^n \right\}$. After the fundamental work of Vapnik and Chervonenkis \cite{VC71} where these combinatorial parameters were first introduced, any class $\cA$ such that $V(\cA) < \infty$ is called a {\em Vapnik--Chervonenkis (VC) class}, and $V(\cA)$ is called its {\em Vapnik--Chervonenkis (VC) dimension}. Examples of VC classes include:
\begin{itemize}
	\item The class of all rectangles in $\R^d$ with VC dimension $2d$.
	\item The class of all linear halfspaces $H_{w,b} = \{ z \in \R^d : \langle w,z \rangle + b \ge 0 \}$ for $w \in \R^d$, $b \in \R$, with VC dimension $d+1$.
	\item The class of all closed balls $B_{x,r} = \{ z \in \R^d : \| z - x \| \le r \}$ for $x \in \R^d$, $r \in \R^+$, with VC dimension $d+1$.
\end{itemize}
Given a collection $\cA \subset \cB_{\cZ}$, let $\cF \equiv \cF_\cA$ consist of the indicator functions of the elements of $\cA$: $\cF_\cA = \{ 1_A : A \in \cA \}$. Then $\cF_\cA$ is GC, provided $\cA$ is a VC class.

Finite set-theoretic operations (unions, intersections, complements) on VC classes yield VC classes as well. In particular, consider the collection of all Voronoi cells induced by all $m$-point subsets of $\R^d$. Each member of this collection is an intersection of $O(m)$ halfspaces, and therefore we have a VC class. Likewise, injective images of VC classes are VC. 
}
\end{example}

\begin{example}[VC-subgraph classes]\label{ex:VC_sgr}
{\em Given a function $f \in M(\cZ)$, its {\em subgraph} is the subset of $\cZ \times \R$, given by $\{(z,t) : f(z) > t \}$. A class of functions $\cF \subset M(\cZ)$ is called a {\em VC-subgraph class} if the collection of all subgraphs of all $f \in \cF$ is a VC class in $\cZ \times \R$. We define $V(\cF)$, the VC dimension of $\cF$, as the VC dimension of the corresponding collection of subgraphs. For example, if $\cF$ is a linear span of $m$ functions $f_1,\ldots,f_m \in M(\cZ)$, then it is a VC-subgraph class with $V(\cF) \le m + 2$. In this paper, we are interested primarily in the case when $\cF \subset M^{b,1}(\cZ)$. Hence, if $f_1,\ldots,f_m \in M^{b,1}(\cZ)$, then their convex hull is a VC-subgraph class.}
\end{example}

\begin{example}[VC-hull classes]\label{ex:VC_hull} {\em A class of functions $\cF \subset M(\cZ)$ is a {\em VC-hull} class if there exists a VC-subgraph class $\cG \subset M(\cZ)$, such that every $f \in \cF$ is a pointwise limit of a sequence of functions $\{f_n\}$ contained in the {\em symmetric convex hull} of $\cG$,
	\begin{align*}
		\left\{ \sum^m_{i=1} c_i g_i : m \in \N;  \sum^m_{i=1}|c_i| \le 1; g_1, \ldots, g_m \in \cG \right\}
	\end{align*}
	For example, the set of all monotone functions $f : \R \to [0,1]$ is VC-hull (though not VC-subgraph).
}
\end{example}

\begin{example}[Smooth functions]\label{ex:smooth}
{\em Let $\cZ = [0,1]^d$. For any multi-index, i.e., a vector $\underline{k} = (k_1,\ldots,k_d) \in \{0,1,\ldots \}^d$, define the differential operator
	\begin{align*}
		D^{\underline{k}} \deq \frac{\partial^{|k|}}{\partial z_1^{k_1} \ldots \partial z_d^{k_d}},
	\end{align*}
where $|k| \deq k_1 + \ldots + k_d$. Given $\alpha > 0$, define for a function $f : [0,1]^d \to \R$
	\begin{align*}
		\| f \|_\alpha & \deq \max_{\underline{k}: |k| \le \lfloor \alpha \rfloor} \sup_{z} \left| D^{\underline{k}} f(z) \right| \nonumber\\
		& \qquad \qquad + \max_{\underline{k}: |k| = \lfloor \alpha \rfloor} \sup_{z \neq z'} \frac{\left|D^{\underline{k}}f(z) - D^{\underline{k}}f(z')\right|}{\| z - z' \|^{\alpha - \lfloor \alpha \rfloor}}
	\end{align*}
Let $C^\alpha$ be the set of all continuous functions $f : [0,1]^d \to \R$ with $\| f \|_\alpha \le 1$. Then $C^\alpha$ is a GC class.}
\end{example}

\begin{example}[Bounded Lipschitz functions]\label{ex:BL} {\em Let $(\cZ,d)$ be a complete separable metric space. Define the {\em Lipschitz seminorm} $\| \cdot \|_{\rm L}$ on $M(\cZ)$ by
	\begin{align*}
		\| f \|_{\rm L} \deq \sup_{z \neq z'} \frac{|f(z) - f(z')|}{d(z,z')}
	\end{align*}
	and the {\em bounded Lipschitz norm} $\| \cdot \|_{\rm BL}$ by
	\begin{align*}
		\| f \|_{\rm BL} \deq \| f \|_\infty + \| f \|_{\rm L}.
	\end{align*}
	Note that any function $f$ with $\| f \|_{\rm BL} < \infty$ is automatically in $C^b(\cZ)$, the Banach space of all bounded continuous functions on $\cZ$.

Let $\cF^1_{\rm BL} = \{ f \in C^b(\cZ) : \| f \|_{\rm BL} \le 1\}$. Then $\cF$ is a GC class. This is a consequence of the fact that the {\em bounded Lipschitz metric} (also known as the Fortet--Mourier metric)
\begin{align*}
	\beta(P,P') &\deq \sup_{ f \in \cF^1_{\rm BL}} |P(f) - P'(f)| \\
	& \equiv \| P - P' \|_{\cF^1_{\rm BL}} \quad P,P' \in \cP(\cZ)
	\end{align*}
	metrizes the topology of weak convergence in $\cP(\cZ)$. Recall that a sequence $\{P_n\}$ in $\cP(\cZ)$ converges {\em weakly} to $P \in \cP(\cZ)$ (the fact denoted by $P_n \rightsquigarrow P$) if
	\begin{align*}
		P_n(f) \xrightarrow{n \to \infty} P(f), \quad \forall f \in C^b(\cZ).
		\end{align*}
		Then $P_n \rightsquigarrow P$ if and only if $\beta(P_n,P) \xrightarrow{n \to \infty} 0$ \cite[Theorem~11.3.3]{Dud02}. Now, according to a theorem of Varadarajan \cite[Theorem~11.4.1]{Dud02}, given any i.i.d.\ random process $\{Z_i\}^\infty_{i=1}$ over $\cZ$ with common marginal distribution $P \in \cP(\cZ)$, the empirical distributions $\emp_{Z^n}$ converge weakly to $P$ almost surely:
		\begin{align}\label{eq:Varadarajan}
			\emp_{Z^n} \rightsquigarrow P \qquad \text{a.s.}
			\end{align}
			From the foregoing discussion, \eqref{eq:Varadarajan} is equivalent to
			\begin{align*}
				\beta(\emp_{Z^n},P) &= \sup_{f \in \cF^1_{\rm BL}} \left| \emp_{Z^n}(f) - P(f) \right| \\
				&\equiv \left\| \emp_{Z^n} - P \right\|_{\cF^1_{\rm BL}} \xrightarrow{n \to \infty} 0 \quad \text{a.s.}
				\end{align*}
In other words, $\cF^1_{\rm BL}$ is a GC class, and Varadarajan's theorem can be paraphrased to say that this function class obeys a ULLN.}
\end{example}

\section{Rethinking typicality for general alphabets}
\label{sec:general_typicality}

Now that all necessary definitions are made, we can introduce our revised notion of typicality for standard Borel spaces.

For finite alphabets, there are multiple equivalent definitions of a typical sequence. Here is one, based on the total variation distance \cite{CufPerCov09}, often referred to as {\em strong typicality} \cite[Section~10.6]{CovTho06}: 

\begin{definition}\label{def:strong_typicality} Given a finite set $\cZ$ and a probability distribution (mass function) $P$ on it, the {\em typical set} $\cT^{(n)}_\eps(P)$, for $\eps > 0$, is the set of all $n$-tuples $z^n \in \cZ^n$ whose empirical distributions $\emp_{z^n}$ are $\eps$-close to $P$ in total variation:
\begin{align*}
\cT^{(n)}_\eps(P) \deq \left\{ z^n \in \cZ^n: \dvar{\emp_{z^n}}{P} < \eps \right\}.
\end{align*}
\end{definition}

\noindent By the Law of Large Numbers, if $\{Z_i\}$ is a sequence of i.i.d.\ draws from $P$, then
\begin{align*}
	\Pr \big( Z^n \not\in \cT^{(n)}_\eps(P) \big) \xrightarrow{n \to \infty} 0.
\end{align*}
	If $\cZ$ is a Cartesian product $\cX \times \cY$, then one can define {\em jointly} and {\em conditionally} typical sets and sequences \cite{CovTho06}.

However, all of this breaks down for general (uncountably infinite) alphabets. The reason is that the total variation distance between any discrete measure and a nonatomic measure is equal to $2$. Indeed, if $(\cZ,\cB_\cZ)$ is a standard Borel space and $P \in \cP(\cZ)$ assigns zero mass to singletons, $P(\{ z\}) = 0, \forall z \in \cZ$, then we can take any $n$-tuple $z^n \in \cZ^n$ and let $A$ be the set of its {\em distinct} elements, so that $\emp_{z^n}(A) = 1$ and $P(A) = 0$. Using this and the definition \eqref{eq:dvar}, we deduce that $\dvar{\emp_{z^n}}{P} = 2$.

Of course, one could use typicality arguments by considering arbitrary finite quantizations of the underlying spaces, but, as long as we are dealing with nonatomic measures, this does not get rid of the above issue even in the limit of increasingly fine quantizations. While discretization is sufficient for many purposes \cite{Gra90a}, there is another issue that arises when dealing with Markov structures in multiterminal settings: quantization destroys the Markov property \cite[Section~VIII]{Csi98}.

To resolve this conundrum, we recall (cf.~Sec.~\ref{sec:prelims}) that
\begin{align*}
\dvar{P}{P'} = \sup_{\|f\|_\infty \le 1} |P(f) - P'(f)|,
\end{align*}
where the supremum is over {\em all} measurable functions $f : \cZ \to [-1,1]$. When the underlying measurable space supports nonatomic probability measures, this function class turns out to be too large to admit uniform convergence of empirical averages to statistical expectations. A natural solution, then, is to restrict the class of functions:

\begin{definition}\label{def:GC_typicality} Let $\cZ$ be a Borel space and let $\cF \subset M^{b,1}(\cZ)$ be a GC class of functions. Given a probability measure $P \in \cP(\cZ)$, the {\em typical set} $\cT^{(n)}_{\eps,\cF}(P)$, for $\eps > 0$, is the set of all $n$-tuples $z^n \in \cZ^n$ whose empirical distributions $\emp_{z^n}$ are $\eps$-close to $P$ in the $\| \cdot \|_\cF$ seminorm:
\begin{align*}
\cT^{(n)}_{\eps,\cF}(P) \deq \left\{ z^n \in \cZ^n: \| \emp_{z^n} - P \|_\cF < \eps \right\}.
\end{align*}
\end{definition}
\noindent One thing to note is that when $\cZ$ is finite, we can just take $\cF = M^{b,1}(\cZ)$ and immediately recover Definition~\ref{def:strong_typicality}. Moreover, if $\cZ$ is a complete separable metric space, then we can take $\cF = \cF^1_{\rm BL}$, in which case our notion of typicality becomes compatible with the bounded Lipschitz metric that metrizes the weak topology on the space of probability laws (cf.~Example~\ref{ex:BL}). 

\subsection{Basic properties of GC typical sets}

We now establish several basic properties of GC typical sets. First of all, any sufficiently long sequence emitted by a stationary memoryless source is typical with high probability:

\begin{proposition}\label{prop:GC_iid} Consider a Borel space $\cZ$ and a GC class $\cF \subset M^{b,1}(\cZ)$. If $\{Z_i\}^\infty_{i=1}$ is an i.i.d.\ random process over $\cZ$ with common law $P$, then for any $\eps > 0$
	\begin{align*}
			\lim_{n \to \infty}\Pr\big(Z^n \not\in \cT^{(n)}_{\eps,\cF}(P)\big) = 0
	\end{align*}
\end{proposition}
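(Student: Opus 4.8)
The plan is to observe that this is essentially a restatement of the Glivenko--Cantelli property itself, modulo the routine reduction from almost-sure convergence to convergence in probability. Concretely, let $\{Z_i\}_{i=1}^\infty$ be i.i.d.\ with common law $P$, and consider the sequence of random variables $W_n \deq \| \emp_{Z^n} - P \|_\cF$. By Remark~\ref{rem:measurability}, under the standing ``niceness'' assumption on $\cF$, each $W_n$ is genuinely a measurable function of $Z^n$, so there is no measurability obstruction. By the definition of a GC class (Definition following \eqref{eq:GC}), we have $W_n \to 0$ almost surely as $n \to \infty$.

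The only remaining step is to note that for any $\eps > 0$,
\begin{align}
	\Pr\big( Z^n \notin \cT^{(n)}_{\eps,\cF}(P)\big) = \Pr\big( \| \emp_{Z^n} - P \|_\cF \ge \eps \big) = \Pr(W_n \ge \eps),
\end{align}
directly from Definition~\ref{def:GC_typicality}. Since almost-sure convergence implies convergence in probability, $\Pr(W_n \ge \eps) \to 0$ as $n \to \infty$, which is exactly the claim. (Alternatively, one may invoke the remark after the GC definition that, for bounded function classes, the a.s.\ convergence in \eqref{eq:GC} may be replaced without loss of generality by convergence in probability, citing \cite{Steele78,WaaWel96}; either route closes the argument.)

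There is essentially no hard part here: the proposition is a near-immediate corollary of the GC hypothesis. The only point requiring any care is the implicit measurability of $\| \emp_{Z^n} - P \|_\cF$, which is precisely what Remark~\ref{rem:measurability} is there to handle, and which one should flag explicitly so that the probability in the statement is well-defined. One might also remark that this is the exact analogue of the standard fact, in the finite-alphabet case, that $\Pr(Z^n \notin \cT^{(n)}_\eps(P)) \to 0$ by the ordinary Law of Large Numbers --- here the role of the scalar LLN is played by the uniform LLN built into the GC assumption.
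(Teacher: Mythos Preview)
Your argument is correct and is exactly the approach the paper takes: the paper's proof is the single line ``Immediate from definitions,'' and you have simply spelled out what that means (GC gives $\|\emp_{Z^n}-P\|_\cF \to 0$ a.s., hence in probability, which is the claim). Your added attention to measurability via Remark~\ref{rem:measurability} is a welcome clarification but not a departure in method.
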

	
\begin{IEEEproof} Immediate from definitions.
\end{IEEEproof}

Another desirable property is for typicality to be preserved under coordinate projections. It is not hard to show that, for any two finite alphabets $\cX$ and $\cY$ and any two $n$-tuples $x^n \in \cX^n$ and $y^n \in \cY^n$ that are jointly typical w.r.t.\ some $P \in \cP(\cX \times \cY)$ in the sense of Definition~\ref{def:strong_typicality}, $x^n$ (resp., $y^n$) is typical w.r.t.\ the marginal distribution $P_X$ (resp., $P_Y$). The following lemma gives a sufficient condition for GC typicality to be preserved under projections:

\begin{proposition} Suppose $\cZ = \cX \times \cY$. Let $\pi_\cX : \cZ \to \cX$ be the coordinate projection mapping onto $\cX$, i.e., $\pi_\cX(x,y) = x$, and extend it to tuples via
\begin{align*}
\pi_\cX((x_1,y_1),\ldots,(x_n,y_n)) = (x_1,\ldots,x_n).
\end{align*}
Then for any $n \in \N$, any $\eps > 0$, any $P \in \cP(\cZ)$, and any GC class $\cF_\cX \subset M^{b,1}(\cX)$ such that $\cF_\cX \circ \pi_\cX \subseteq \cF$, we have the inclusion
\begin{align}\label{eq:GC_marginal}
	\pi_\cX \left( \cT^{(n)}_{\eps,\cF}(P)\right) \subseteq \cT^{(n)}_{\eps,\cF_\cX}(P_X).
\end{align}
\end{proposition}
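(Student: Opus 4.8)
The plan is to establish the inclusion \eqref{eq:GC_marginal} tuple-by-tuple, noting that it is a purely deterministic (measure-theoretic) statement: the hypothesis that $\cF_\cX$ is GC is used only to ensure that $\cT^{(n)}_{\eps,\cF_\cX}(P_X)$ is a legitimate GC typical set in the sense of Definition~\ref{def:GC_typicality}, and plays no role in the containment argument itself. I would fix $z^n = ((x_1,y_1),\ldots,(x_n,y_n)) \in \cT^{(n)}_{\eps,\cF}(P)$ and set $x^n = \pi_\cX(z^n) = (x_1,\ldots,x_n)$. The whole argument rests on transporting each test function on $\cX$ up to a test function on $\cZ$ via $\pi_\cX$.

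First I would observe that for every $g \in \cF_\cX$ the composition $g \circ \pi_\cX$ is an admissible test function on $\cZ$ — it is measurable (as $\pi_\cX$ is measurable) and satisfies $\|g \circ \pi_\cX\|_\infty = \|g\|_\infty \le 1$ — and, by hypothesis, $g \circ \pi_\cX \in \cF_\cX \circ \pi_\cX \subseteq \cF$. Next I would record the two bookkeeping identities
\begin{align}
\emp_{x^n}(g) &= \frac{1}{n}\sum^n_{i=1}g(x_i) = \frac{1}{n}\sum^n_{i=1}(g\circ\pi_\cX)(x_i,y_i) = \emp_{z^n}(g\circ\pi_\cX), \\
P_X(g) &= \int_\cX g\,dP_X = \int_\cZ (g \circ \pi_\cX)\,dP = P(g\circ\pi_\cX),
\end{align}
the second being precisely the change-of-variables formula for the image measure $P_X = P \circ \pi_\cX^{-1}$. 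Subtracting and using $g \circ \pi_\cX \in \cF$ gives, for each $g \in \cF_\cX$,
\begin{align}
|\emp_{x^n}(g) - P_X(g)| = |\emp_{z^n}(g\circ\pi_\cX) - P(g\circ\pi_\cX)| \le \|\emp_{z^n} - P\|_\cF.
\end{align}

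Finally, since the right-hand side above is a single fixed number strictly smaller than $\eps$ (because $z^n \in \cT^{(n)}_{\eps,\cF}(P)$), taking the supremum over $g \in \cF_\cX$ preserves the strict inequality and yields $\|\emp_{x^n} - P_X\|_{\cF_\cX} \le \|\emp_{z^n} - P\|_\cF < \eps$, i.e., $x^n \in \cT^{(n)}_{\eps,\cF_\cX}(P_X)$. I do not expect any genuine obstacle here; the only points requiring a moment's care are (i) checking that $g \circ \pi_\cX$ lands in $M^{b,1}(\cZ)$ — which is exactly why the natural hypothesis is $\cF_\cX \circ \pi_\cX \subseteq \cF$ rather than the ill-typed $\cF_\cX \subseteq \cF$ (elements of $\cF_\cX$ are functions on $\cX$, not on $\cZ$) — and (ii) noticing that the strict inequality $< \eps$ is retained under the supremum over $\cF_\cX$ precisely because every marginal deviation is dominated by the \emph{same} quantity $\|\emp_{z^n} - P\|_\cF$.
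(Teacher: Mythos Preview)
Your proposal is correct and follows essentially the same argument as the paper: fix a typical $z^n$, use the identities $\emp_{x^n}(g) = \emp_{z^n}(g\circ\pi_\cX)$ and $P_X(g) = P(g\circ\pi_\cX)$, then bound the supremum over $\cF_\cX$ by the supremum over $\cF$ via the inclusion $\cF_\cX \circ \pi_\cX \subseteq \cF$. Your additional remarks on why the strict inequality survives the supremum and why the GC hypothesis on $\cF_\cX$ is not actually used in the containment are accurate and mirror the paper's own Remark following the proposition.
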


\begin{remark}{\em As can be seen from the proof below, the class $\cF_\cX$ need not be GC in order for the inclusion \eqref{eq:GC_marginal} to hold. However, then one would not be able to transfer a convergence result like Proposition~\ref{prop:GC_iid} to the $\cX$-valued part of the sequence.}
\end{remark}

\begin{IEEEproof} Suppose $z^n = ((x_1,y_1),\ldots,(x_n,y_n)) \in \cT^{(n)}_{\eps,\cF}(P)$. Then
\begin{align*}
	& \left\| \emp_{x^n} - P_X \right\|_{\cF_\cX} \nonumber\\
	&= \sup_{f \in \cF_\cX} \left| \frac{1}{n}\sum^n_{i=1}f(x_i) - P_X(f) \right| \\
	&= \sup_{f \in \cF_\cX} \left| \frac{1}{n}\sum^n_{i=1}f \circ \pi_\cX(z_i) - P(f \circ \pi_\cX) \right| \\
	&\le \sup_{f \in \cF} \left| \frac{1}{n}\sum^n_{i=1}f(z_i) - P(f)\right| \\
	&= \left\| \emp_{z^n} - P \right\|_\cF \\
	& < \eps.
\end{align*}
Thus, $x^n \in \cT^{(n)}_{\eps,\cF_\cX}(P_X)$, which proves \eqref{eq:GC_marginal}.
\end{IEEEproof}

As an example, let $\cX = \R^k$, $\cY = \R^m$, let $\cF$ be the collection of indicator functions of all halfspaces in $\cZ = \R^{k+m}$, and let $\cF_\cX$ be the collection of indicator functions of all halfspaces in $\cX$ (cf.~Example~\ref{ex:VC} for definitions and notation). For any $w \in \R^k, b \in \R$ and $z = (x,y) \in \cZ$, we have
\begin{align*}
\langle w,x \rangle + b  &= \langle w,\pi_\cX(z) \rangle + b \\
&= \langle (w,0),(x,y)\rangle + (b,0).
\end{align*}
Hence, $1_{H_{(w,0),(b,0)}} = 1_{H_{(w,b)}} \circ \pi_\cX$ for any choice of $w \in \R^k,b \in \R$, so the condition of the lemma is satisfied.

Finally, we show that our definition of typicality can work in a multiterminal setting. Ideally, one would like to have something like the Markov lemma \cite{Ber78,Kra07}: If $X \to Y \to Z$ is a Markov chain, $(x^n,y^n)$ is typical, and $Z^n$ is obtained by passing $y^n$ through a memoryless channel, then $(x^n,y^n,Z^n)$ should be typical with high probability. However, in our setting such a statement does not make much sense without assuming additional structure for the function class $\cF$.\footnote{Incidentally, this is exactly what Mitran \cite{Mitran} accomplishes for his notion of typicality based on weak convergence.} Instead, we establish the following result, which is essentially an abstract alphabet version of the so-called Piggyback Coding Lemma of Wyner \cite[Lemma~4.3]{Wyn75}:

\begin{lemma}\label{lm:piggyback} Let $U \in \cU$, $V \in \cV$, and $W \in \cW$ be random variables taking values in their respective standard Borel spaces according to a joint distribution $P_{UVW}$, such that $U \to V \to W$ is a Markov chain and $I(V;W) < \infty$. Let $\{(U_i,V_i,W_i)\}^\infty_{i=1}$ be a sequence of i.i.d.\ draws from $P_{UVW}$. Let $\cF \subset M^{b,1}(\cU \times \cW)$ be a GC class of functions. For a given $\eps > 0$, there exist an $n = n(\eps)$ and a mapping $\Phi_n : \cV^n \to \cW^n$, such that
\begin{align}\label{eq:PB_1}
\frac{1}{n}\log\left| \left\{ \Phi_n(v^n) :  v^n \in \cV^n \right\} \right| \le I(V;W) + \eps
\end{align}
and
\begin{align}\label{eq:PB_2}
\Pr \left( (U^n, \Phi_n(V^n)) \not\in \cT^{(n)}_{\eps,\cF}(P_{UW}) \right) < \eps.
\end{align}
\end{lemma}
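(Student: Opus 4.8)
The plan is to mimic Wyner's Piggyback Coding Lemma, replacing the finite-alphabet typicality bookkeeping with GC typicality and the ULLN. The construction is a random covering argument: first I would generate a codebook of i.i.d.\ codewords $W^n(1),\dots,W^n(M)$, with each $W^n(j)$ drawn i.i.d.\ from the marginal $P_W$, where $M = \lceil e^{n(I(V;W)+\eps')} \rceil$ for a small $\eps' < \eps$. The map $\Phi_n$ then sends $v^n$ to the codeword $W^n(J(v^n))$ that is jointly typical with $v^n$ with respect to a suitable GC class on $\cV \times \cW$ (and, if no such codeword exists, to $W^n(1)$). The rate bound \eqref{eq:PB_1} is immediate from $|\{\Phi_n(v^n)\}| \le M$. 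Everything else is about showing that, with high probability over the source sequence and the codebook, (i) a jointly typical codeword exists, and (ii) the resulting pair $(U^n,\Phi_n(V^n))$ lands in $\cT^{(n)}_{\eps,\cF}(P_{UW})$.

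For step (i), I would apply the classical covering/soft-covering argument: since $M > e^{nI(V;W)}$, a standard second-moment or typicality-covering computation shows that, with probability tending to $1$, the random $v^n$ (drawn from $P_V^{\otimes n}$) has at least one codeword $W^n(j)$ jointly typical with it. The cleanest route here is to pick an auxiliary GC class $\cG$ on $\cV\times\cW$ rich enough that $\cG$-joint-typicality of $(v^n,w^n)$ forces the empirical measure $\emp_{(v^n,w^n)}$ to be close to $P_{VW}$ in a way that (a) has small enough "volume" for the covering count to work out against the rate $I(V;W)$, and (b) pushes forward, under the coordinate map $(v,w)\mapsto w$, to $\cW$-typicality. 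This is where I expect to lean on Proposition~2 (typicality preserved under projections) and on a quantitative version of the covering lemma phrased in terms of $\| \emp_{(V^n,W^n)} - P_{VW}\|_\cG$; I would need the divergence/mutual-information appearing in the exponent to be exactly $I(V;W)$, which is why $I(V;W)<\infty$ is assumed.

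The heart of the matter — and the step I expect to be the main obstacle — is step (ii): transferring joint typicality of $(v^n, W^n(J))$ to typicality of $(U^n, W^n(J))$ with respect to the \emph{given} class $\cF$ on $\cU \times \cW$. In the finite-alphabet world this is exactly the Markov lemma, which the paper explicitly disavows. The substitute, following Wyner--Ziv, is to use the Markov chain $U \to V \to W$ and the fact that, conditioned on $V^n = v^n$, the pair $(U^n, \Phi_n(V^n))$ is obtained by passing $v^n$ through the memoryless channel $P_{U|V}$ on the first coordinate while $\Phi_n(V^n)$ is a deterministic function of $v^n$. The idea is: for each fixed $f \in \cF$, define $g_f(u,v) \deq f(u, \Phi\text{-image determined by } v)$ — more carefully, work with the conditional expectation $h_f(v) \deq \E[f(U, w) \mid V = v]$ evaluated along the codeword, and show that the empirical process $\emp_{U^n \times \Phi_n(V^n)}(f) - P_{UW}(f)$ decomposes into (a) a term controlled by $\cG$-joint-typicality of $(v^n, W^n(J))$ via the Markov structure (so that $\emp$ of $h_f$ under $\emp_{v^n}$ is close to $P_{VW}(f)=P_{UW}(f)$), plus (b) a conditional fluctuation term $\emp_{U^n\times W^n(J)}(f) - \emp_{v^n}(h_f)$ which, conditioned on $V^n$ and the codebook, is an average of independent bounded terms and is handled by a conditional ULLN / Hoeffding-type bound, uniformly over $f \in \cF$ by the GC property of $\cF$ (using the countable-approximability hypothesis of Remark~3 to keep everything measurable). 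The subtlety is that $\cF$ is a class of functions on $\cU\times\cW$ but the channel acts only on $\cU$; one must check that the induced class $\{v \mapsto \E[f(U,w)\mid V=v] : f\in\cF, w\in\cW\}$ (or the relevant finite restriction to codebook values) is still small enough for a uniform concentration bound — this is where the GC hypothesis on $\cF$, rather than mere measurability, is essential. Finally I would union-bound the three bad events (no jointly typical codeword; $\cG$-atypical source pair; uniform fluctuation exceeds $\eps/3$), choose $n$ large, and conclude \eqref{eq:PB_2}.
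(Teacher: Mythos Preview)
Your overall architecture (random codebook at rate just above $I(V;W)$, then a Wyner--Ziv style argument to pass typicality from $(V^n,W^n(J))$ to $(U^n,W^n(J))$ via the Markov chain) is the right instinct, but the execution you outline carries a real gap, and it is precisely the one you flag yourself. In your step~(ii) you need, uniformly over $f\in\cF$, two things: (a) that $\frac{1}{n}\sum_i h_f(v_i,w_i)$ is close to $P_{VW}(h_f)=P_{UW}(f)$ whenever $(v^n,w^n)$ is $\cG$-typical, which forces the induced class $\{h_f:f\in\cF\}$ to sit inside your auxiliary GC class $\cG$ on $\cV\times\cW$; and (b) that the conditional fluctuations $\frac{1}{n}\sum_i\big(f(U_i,w_i)-h_f(v_i,w_i)\big)$ are small uniformly in $f$, where the $U_i$ are independent but \emph{not} identically distributed (each $U_i\sim P_{U|V}(\cdot|v_i)$) and the $w_i$ vary with~$i$. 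Neither follows from the bare hypothesis that $\cF$ is GC on $\cU\times\cW$: the GC property concerns i.i.d.\ sequences, not triangular arrays, and there is no general mechanism by which the GC property survives the map $f\mapsto h_f$. So as written, the argument does not close.

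The paper sidesteps this entirely with a single device: rather than controlling each $f\in\cF$ and then taking a supremum, it collapses the whole event $\{(U^n,\Phi_n(V^n))\notin\cT^{(n)}_{\eps,\cF}(P_{UW})\}$ into the \emph{single} bounded function
\[
\psi_n(u^n,w^n)\;=\;1\big\{(u^n,w^n)\notin\cT^{(n)}_{\eps,\cF}(P_{UW})\big\}.
\]
The GC hypothesis is used exactly once, and only to conclude $\E\psi_n(U^n,W^n)\to 0$. After that, $\cF$ never reappears. One sets $\phi_n(v^n,w^n)=\E[\psi_n(U^n,w^n)\mid V^n=v^n]$ (the Markov chain makes this depend on $w^n$ only as an argument, not through conditioning), defines the abstract ``good set'' $\cS_n=\{\phi_n\le\sqrt{\delta_n}\}$ with $\delta_n=\E\psi_n$, and observes that for \emph{any} map $G:\cV^n\to\cW^n$, $\E\psi_n(U^n,G(V^n))=\E\phi_n(V^n,G(V^n))\le \Pr\big((V^n,G(V^n))\notin\cS_n\big)+\sqrt{\delta_n}$. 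The covering step is then Gallager's Lemma~9.3.1 applied to $\cS_n$, driven by the information density of $(V^n,W^n)$; no auxiliary GC class on $\cV\times\cW$, no induced function class, no conditional ULLN. The moral: push the supremum over $\cF$ inside the indicator \emph{before} you start the covering/Markov bookkeeping.
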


\begin{IEEEproof} For each $n$, define the function $\psi_n \in M^{b,1}(\cU^n \times \cW^n)$ by
	\begin{align*}
		\psi_n(u^n,w^n) \deq 1_{\left\{ (u^n, w^n) \not\in \cT^{(n)}_{\eps,\cF}(P_{UW}) \right\} }.
	\end{align*}
	Since $\cF$ is a GC class, we have by Proposition~\ref{prop:GC_iid}
	\begin{align*}
		\lim_{n \to \infty} \E \psi_n(U^n,W^n) = 0.
	\end{align*}
The desired statement now follows from Lemma~\ref{lm:PB} in Appendix~\ref{app:PB}.
\end{IEEEproof}

\section{Applications to empirical coordination}
\label{sec:example_apps}

We now show three sample applications of GC typicality to the problem of empirical coordination in a two-node network shown in Figure~\ref{fig:generic}. This problem, recently formulated and studied by Cuff et al.~\cite{CufPerCov09}, concerns joint generation of actions at the two nodes, such that the empirical distribution of the actions over time approximates, asymptotically, a desired joint distribution in total variation. Our goal is to extend this setting to general alphabets. As we have shown in Section~\ref{sec:general_typicality}, the total variation criterion is unsuitable for uncountable alphabets, so we consider a relaxation to an appropriate GC class.

As we will show, our notion of GC typicality and Lemma~\ref{lm:piggyback} can be used to develop particularly intuitive achievability arguments and to obtain single-letter characterizations of the best achievable rates. Moreover, convexity of the $\| \cdot \|_\cF$ seminorm is helpful for proving converse results. The downside, however, is that, in general, it is not possible to compute the best achievable rates explicitly even for ``simple'' sources due to the presence of the supremum over $\cF$.

\subsection{Two-node empirical coordination}
\label{ssec:coord}

\begin{figure}
	\centerline{\includegraphics[width=0.8\columnwidth]{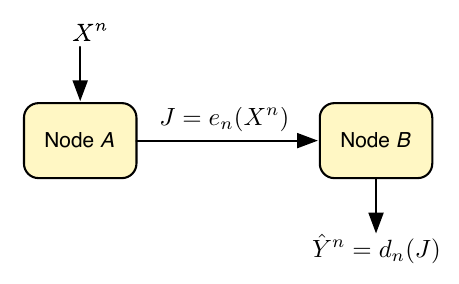}}
	\caption{\label{fig:coord}Two-node empirical coordination.}
\end{figure}

Consider the two-node network shown in Fig.~\ref{fig:coord}, where Node $A$ (resp.,~Node $B$) generates actions from a Borel space $\cX$ (resp.,~$\cY$). At Node $A$, the actions are drawn i.i.d.\ from a fixed law $P_X \in \cP(\cX)$. We also have a conditional probability measure $P_{Y|X}$ that describes the desired distribution of actions at Node $B$ given the actions at Node $A$. Following the terminology of \cite{CufPerCov09}, we will also refer to the choice of $P_{Y|X}$ as a {\em coordination}.  Node $A$ can communicate with Node $B$ over a rate-limited channel, and Node $B$ uses the data it receives to choose its actions. For each $n$, let $X^n \in \cX^n$ and $\wh{Y}^n \in \cY^n$ denote the action sequences at the two nodes. Given a class $\cF \subset M^{b,1}(\cX \times \cY)$ of measurable ``test functions" and a desired distortion level $\Delta \ge 0$, the goal is for Node $A$ to communicate with Node $B$ at a minimal rate to guarantee that, asymptotically,
\begin{align*}
\E \big\| \emp_{(X^n,\wh{Y}^n)} - P_X \otimes P_{Y|X} \big\|_\cF \lesssim \Delta,
\end{align*}
where $P_{XY} = P_X \otimes P_{Y|X}$ is the joint law induced by the source $P_X$ and the coordination $P_{Y|X}$. This is a generalization of the problem of {\em communication of probability distributions}, recently formulated and studied by Kramer and Savari \cite{KraSav07} in the finite-alphabet setting. Here, we allow general alphabets.

\begin{definition} An {\em $(n,M)$-code} is a pair $(e_n,d_n)$, where $e_n : \cX^n \to [M]$ is the {\em encoder} and $d_n : [M] \to \cY^n$ is the {\em decoder}, and $[M] \deq \{1,2,\ldots,M\}$. We will denote $\wh{Y}^n = d_n(e_n(X^n))$.
\end{definition}

\begin{definition} Given a source $P_X$, a coordination $P_{Y|X}$, and a distortion $\Delta$, let $\cE(\Delta,P_{Y|X})$ denote the set of all $Q \in \cP(\cX \times \cY)$, such that
	\begin{align*}
		Q_X = P_X \quad\text{and}\quad
\| Q - P_X \otimes P_{Y|X} \|_\cF \le \Delta.
\end{align*}
Define the {\em rate-distortion function for empirical coordination} as
\begin{align*}
R(\Delta,P_{Y|X}) \deq \inf_{Q \in \cE(\Delta,P_{Y|X})} I(Q).
\end{align*}
\end{definition}

\begin{theorem}\label{thm:coord}  Let $P_{Y|X}$ be a given coordination and $\Delta$ a given distortion level.
\begin{itemize}
\item [a)] {\bf Direct part:} If $\cF$ is a GC class and $R(\Delta,P_{Y|X}) < \infty$, then for any $\eps > 0$ there exist $n \equiv n(\eps)$ and an $(n,2^{nR})$ code $(e_n,d_n)$ with $R < R(\Delta,P_{Y|X}) + \eps$ satisfying
\begin{equation}\label{eq:coord_direct}
\E \big\| \emp_{(X^n,\wh{Y}^n)} - P_X \otimes P_{Y|X} \big\|_\cF \le \Delta + \eps.
\end{equation}
\item [b)] {\bf Converse part:} Suppose that there exists an $(n,2^{nR})$-code $\wh{Y}^n(X^n) = d_n(e_n(X^n))$, satisfying
\begin{equation}\label{eq:coord_converse}
\E \big\| \emp_{(X^n,\wh{Y}^n)} - P_X \otimes P_{Y|X} \big\|_\cF \le \Delta.
\end{equation}
Then $R \ge R(\Delta,P_{Y|X})$.
\end{itemize}
\end{theorem}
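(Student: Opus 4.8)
The plan is to prove the two parts separately: the direct part follows from the Piggyback Coding Lemma (Lemma~\ref{lm:piggyback}) applied to a degenerate Markov chain, while the converse combines a standard single-letterization with convexity of $I(\cdot)$ and of the $\|\cdot\|_\cF$ seminorm.

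\textbf{Direct part.} Since $R(\Delta,P_{Y|X})<\infty$, fix $\eps>0$ and pick a near-optimal $Q\in\cE(\Delta,P_{Y|X})$ with $I(Q)<R(\Delta,P_{Y|X})+\eps/3$ (the infimum defining $R$ need not be attained, so choosing such a $Q$ is the only subtlety here); by definition of $\cE$ this $Q$ has $Q_X=P_X$ and $\|Q-P_X\otimes P_{Y|X}\|_\cF\le\Delta$. Now apply Lemma~\ref{lm:piggyback} to the degenerate chain $X\to X\to Y$, i.e.\ take $(U,V,W)=(X,X,Y)$ with $(X,Y)\sim Q$: the chain $U\to V\to W$ holds trivially because $U=V$, we have $I(V;W)=I(Q)<\infty$, and $P_{UW}=Q$. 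With parameter $\eps''=\eps/3$, the lemma yields an $n$ and a map $\Phi_n:\cX^n\to\cY^n$ with $\tfrac1n\log|\{\Phi_n(x^n)\}|\le I(Q)+\eps/3<R(\Delta,P_{Y|X})+2\eps/3$ and $\Pr\big((X^n,\Phi_n(X^n))\notin\cT^{(n)}_{\eps/3,\cF}(Q)\big)<\eps/3$. Let $e_n$ transmit the index of $\Phi_n(X^n)$ within the finite range of $\Phi_n$ and let $d_n$ be the inverse lookup, so $\wh{Y}^n=\Phi_n(X^n)$; this is an $(n,2^{nR})$ code with $R<R(\Delta,P_{Y|X})+\eps$. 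For the distortion, the triangle inequality gives $\|\emp_{(X^n,\wh{Y}^n)}-P_X\otimes P_{Y|X}\|_\cF\le\|\emp_{(X^n,\wh{Y}^n)}-Q\|_\cF+\Delta$; splitting according to whether $(X^n,\wh{Y}^n)$ is $(\eps/3)$-typical for $Q$ (where the first term is $<\eps/3$) or not (where it is $\le 2$, since $\|f\|_\infty\le1$ for $f\in\cF$), and using the lemma's probability bound, one gets $\E\|\emp_{(X^n,\wh{Y}^n)}-Q\|_\cF<\eps/3+2\cdot\eps/3=\eps$, which yields \eqref{eq:coord_direct}.

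\textbf{Converse part.} Given an $(n,2^{nR})$ code achieving \eqref{eq:coord_converse}, write $M=e_n(X^n)$, so $X^n\to M\to\wh{Y}^n$ and $nR\ge H(M)\ge I(X^n;M)\ge I(X^n;\wh{Y}^n)$. Since the $X_i$ are i.i.d., the usual chain-rule bound (adding and removing the independent past $X^{i-1}$) gives $I(X^n;\wh{Y}^n)\ge\sum_{i=1}^n I(X_i;\wh{Y}_i)$. Let $Q^{(i)}$ be the law of $(X_i,\wh{Y}_i)$, so that $Q^{(i)}_X=P_X$ for every $i$, and set $\bar Q=\tfrac1n\sum_i Q^{(i)}$. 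As all $Q^{(i)}$ share the $X$-marginal $P_X$, convexity of mutual information in the channel for a fixed input distribution yields $\tfrac1n\sum_i I(Q^{(i)})\ge I(\bar Q)$, hence $R\ge I(\bar Q)$. It remains to check $\bar Q\in\cE(\Delta,P_{Y|X})$: clearly $\bar Q_X=P_X$, and testing against each $f\in\cF$ shows $\bar Q=\E\,\emp_{(X^n,\wh{Y}^n)}$ as a measure, so by Jensen's inequality for the convex functional $\nu\mapsto\|\nu-P_X\otimes P_{Y|X}\|_\cF$ (equivalently, $|\bar Q(f)-P_X\otimes P_{Y|X}(f)|\le\E|\emp_{(X^n,\wh{Y}^n)}(f)-P_X\otimes P_{Y|X}(f)|\le\E\|\emp_{(X^n,\wh{Y}^n)}-P_X\otimes P_{Y|X}\|_\cF$ for each $f$, then take the supremum) we obtain $\|\bar Q-P_X\otimes P_{Y|X}\|_\cF\le\Delta$. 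Thus $\bar Q\in\cE(\Delta,P_{Y|X})$ and $R\ge I(\bar Q)\ge R(\Delta,P_{Y|X})$.

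\textbf{Main obstacle.} The genuinely delicate step is the setup of the direct part: one must recognize that the coordination problem with no side information is precisely the instance of Lemma~\ref{lm:piggyback} in which $\cV=\cX$, the ``piggybacked'' variable is $\cU=\cX$, and $P_{UW}$ is the chosen test distribution $Q$, so that the lemma's binning bound $I(V;W)=I(Q)$ is exactly the rate we want; and one must handle the possible non-attainment of the infimum in $R(\Delta,P_{Y|X})$ by passing to a near-optimal $Q$. Everything else — the code construction, the triangle-inequality distortion estimate, the single-letterization, and the two convexity bounds in the converse — is routine, modulo the measurability caveat of Remark~\ref{rem:measurability}, which is assumed throughout.
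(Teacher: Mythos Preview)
Your proof is correct and follows essentially the same approach as the paper. The direct part is identical in substance (near-optimal $Q$, Lemma~\ref{lm:piggyback} on the degenerate chain $X\to X\to Y$, triangle inequality), with your $\eps/3$ bookkeeping making explicit the passage from the lemma's probability bound to an expectation bound that the paper leaves implicit. In the converse, the paper introduces a time-mixing variable $T$ uniform on $[n]$ and uses $I(X_T;\wh{Y}_T\mid T)\ge I(X_T;\wh{Y}_T)$ (via $I(X_T;T)=0$), whereas you invoke convexity of $I(\cdot)$ in the channel for fixed input marginal; these are the same inequality, since the law of $(X_T,\wh{Y}_T)$ is exactly your $\bar Q=\tfrac1n\sum_i Q^{(i)}$, so the two arguments coincide line for line.
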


\begin{remark}{\em Note that the converse does not require $\cF$ to be GC. However, it must be sufficiently ``well-behaved'' for $\| \emp_{(X^n,\wh{Y}^n)} - P_X \otimes P_{Y|X}\|_\cF$ to be measurable for any choice of a (measurable) encoder-decoder pair.}\end{remark}

\begin{IEEEproof}[Proof (direct part)] To prove the direct part, fix $(\Delta,P_{Y|X})$ and pick any $Q \in \cE(\Delta,P_{Y|X})$ such that $I(Q) < R(\Delta,P_{Y|X}) + \eps/2$. Let $X \in \cX$ and $U \in \cY$ have joint law $Q$. Then $X \to X \to U$ is a Markov chain, and Lemma~\ref{lm:piggyback} guarantees the existence of an $n$ and a mapping $ \Phi_n : \cX^n \to \cY^n$, such that
\begin{align*}
\frac{1}{n} \log \left| \left\{ \Phi_n(\cX^n) \right\}\right| & \le I(Q) + \eps/2 < R(\Delta,P_{Y|X}) + \eps
\end{align*}
and
\begin{align*}
\E\left\| \emp_{(X^n,\Phi_n(X^n))} - Q \right\|_\cF \le \eps.
\end{align*}
Let $\wh{Y}^n = \Phi_n(X^n)$. Then the triangle inequality gives
\begin{align*}
& \E \big \| \emp_{(X^n,\wh{Y}^n)} - P_X \otimes P_{Y|X} \big\|_\cF \nonumber \\
& \qquad  \le \E \big\| \emp_{(X^n,\wh{Y}^n)} - Q \big\|_\cF + \big\| Q - P_X \otimes P_{Y|X} \big\|_\cF \\
& \qquad  \le \Delta + \eps,
\end{align*}
which establishes \eqref{eq:coord_direct}.\end{IEEEproof}

\begin{IEEEproof}[Proof (converse part)] For the converse, we will use the time mixing technique (cf.~\cite{CufPerCov09} and Appendix~\ref{app:time_mixing}). Let $\wh{Y}^n(X^n)$ be an $(n,2^{nR})$-code such that \eqref{eq:coord_converse} holds.  Let $T$ be a random variable uniformly distributed over the set $[n]$, independently of $X^n$, and let $\wh{Q}$ denote the joint distribution of $(X_T,\wh{Y}_T)$. Then
	\begin{align*}
		n R &\stackrel{{\rm(a)}}{\ge} H(\wh{Y}^n(X^n))  \\
		&= H(\wh{Y}^n(X^n)) - H(\wh{Y^n}(X^n)|X^n)  \nonumber\\
		&= I(X^n; \wh{Y}^n(X^n)) \nonumber \\
		&\stackrel{{\rm (b)}}{\ge} \sum^n_{t=1} I(X_t; \wh{Y}_t) \\
		&\stackrel{{\rm (c)}}{=} n I(X_T; \wh{Y}_T | T) \\
		&\stackrel{{\rm (d)}}{=} n I(X_T; \wh{Y}_T,T) \\
		&\ge n I(X_T; \wh{Y}_T) \nonumber \\
		&= n I(\wh{Q}),
	\end{align*}
where:
\begin{itemize}
	\item (a) holds because the log-cardinality of the range of $\wh{Y}^n(\cdot)$ is bounded by $nR$
	\item (b) is a standard information-theoretic fact: if $X^n$ is an i.i.d.\ tuple, then for any sequence $\wh{Y}_1,\ldots,\wh{Y}_n$ jointly distributed with $X^n$
\begin{align*}
	I(X^n; \wh{Y}^n) \ge \sum^n_{t=1} I(X_t; \wh{Y}_t)
\end{align*}
\item (c) follows from the construction of $T$
\item (d) holds because, by the chain rule for mutual information,
\begin{align*}
	I(X_T; \wh{Y}_T,T) = I(X_T;T) + I(X_T; \wh{Y}_T |T),
\end{align*}
where the first term on the r.h.s.\ is zero because $X^n$ is i.i.d.\ (see Fact 1 in Appendix~\ref{app:time_mixing}).
\end{itemize}
The remaining steps are consequences of other definitions and standard information-theoretic identities.

Since $X^n$ is i.i.d., $X_T$ is independent of $T$ and has the same distribution as $X_1$, namely $P_X$. Moreover, the expected empirical distribution $\E \emp_{(X^n,\wh{Y}^n)}$ is equal to $P_{(X_T,\wh{Y}_T)} \equiv \wh{Q}$ (Fact 2 in Appendix~\ref{app:time_mixing}). Thus, we can write
\begin{align*}
 \big \| \wh{Q} - P_X \otimes P_{Y|X} \big \|_\cF 
& = \big \| \E \emp_{(X^n,\wh{Y}^n)} - P_{X \otimes Y} \big \|_\cF \\
& \stackrel{{\rm (a)}}{\le} \E \big \| \emp_{(X^n,\wh{Y}^n)} - P_X \otimes P_{Y|X} \big \|_\cF \\
& \stackrel{{\rm (b)}}{\le} \Delta,
\end{align*}
where (a) follows from convexity, and (b) from \eqref{eq:coord_converse}. Hence, $\wh{Q} \in \cE(\Delta,P_{Y|X})$, so $R \ge I(\wh{Q}) \ge R(\Delta,P_{Y|X})$.
\end{IEEEproof}

\subsection{Two-node empirical coordination with side information}
\label{ssec:KS}

We now consider a generalization of the set-up from the preceding section, in which we also allow side information at the decoder. As before, we have a source distribution $P_X \in \cP(\cX)$ and a desired coordination $P_{Y|X}$. In addition, we have a side information channel $P_{Z|X}$ with input alphabet $\cX$ and output alphabet $\cZ$, which is also assumed to be standard Borel. Let $\{(X_i,Z_i)\}^\infty_{i=1}$ be an infinite sequence of independent draws from $P_{XZ} = P_X \otimes P_{Z|X}$. Consider the two-node network shown in Figure~\ref{fig:KS}. Node $A$ (resp.,~Node $B$) has perfect observations of $\{X_i\}$ (resp.,~$\{Z_i\}$). As before, Node $A$ can transmit information to Node $B$ over a rate-limited channel. The goal is for Node $A$ to communicate with Node $B$ at a minimal rate, so that Node $B$ can approximate the desired empirical process to within a given distortion level $\Delta$. More precisely, given a block length $n$ and denoting by $\wh{Y}^n$ the reconstruction of $Y^n$ at Node $B$, we wish to guarantee that
\begin{align*}
\E \big\| \emp_{(X^n,\wh{Y}^n)} - P_{XY} \big\|_\cF \lesssim \Delta.
\end{align*}
As we will see, the minimum achievable rate admits a single-letter characterization reminiscent of the Wyner--Ziv rate-distortion function for lossy source coding with decoder side information \cite{WynZiv76,Wyn78}. 

\begin{figure}
	\centerline{\includegraphics[width=0.8\columnwidth]{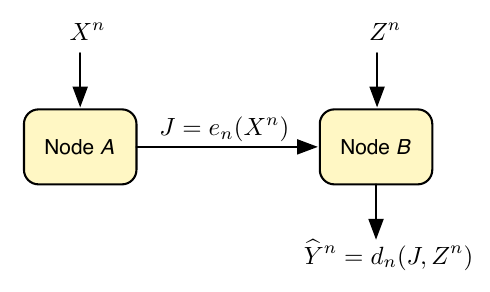}}
	\caption{\label{fig:KS}Two-node empirical coordination with side information.}
\end{figure}

\begin{definition} An {\em $(n,M)$-code} is a pair $(e_n,d_n)$, where  $e_n : \cX^n \to [M]$ is the {\em encoder} and $d_n : [M] \times \cZ^n \to \cY^n$ is the {\em decoder}. We will denote $\wh{Y}^n = d_n(e_n(X^n),Z^n)$.
\end{definition}

\begin{definition}Given a source $P_X$, a coordination $P_{Y|X}$, and a side information channel $P_{Z|X}$, let $\cE(\Delta,P_{Y|X},P_{Z|X})$ denote the set
	$$
	\{ Q \in \cP(\cX \times \cZ \times \cU) : \cU \text{ is standard Borel}\}
	$$
	such that:
\begin{enumerate}
\item $Q_{XZ} = P_{XZ}$
\item $Q_{U|XZ} = Q_{U|X}$ (i.e.,~$Z \to X \to U$ is a Markov chain)
\item There is a function $g : \cZ \times \cU  \to \cY$, such that
\begin{align*}
\| Q_{XW} - P_{XY} \|_\cF \le \Delta, 
\end{align*}
where $W = g(Z,U)$.
\end{enumerate}
With this, define the {\em rate-distortion function for empirical coordination with decoder side information} as
\begin{align*}
R(\Delta,P_{Y|X},P_{Z|X}) \deq \inf_{Q \in \cE(\Delta)} [I(Q_{XU}) - I(Q_{ZU})].
\end{align*}
\end{definition}

\begin{theorem}  Let $\cF$ be a class of functions $f : \cX \times \cY \to [0,1]$ and $\Delta$ a nonnegative distortion level.
\begin{itemize}
\item [a)] {\bf Direct part:} Suppose that $\cF$ is a GC class, and that for any $\delta > 0, \mu \in \cP(\cX \times \cY)$ one can find a finite set $\{ \wh{y}_j \}^N_{j=1} \subset \cY$ and a {\em quantizer} $q : \cY \to \{ \wh{y}_j \}$, such that
\begin{align}\label{eq:quantizer}
\| \mu_{Xq(Y)} - \mu \|_\cF \le \delta.
\end{align}
If  $R(\Delta,P_{Y|X},P_{Z|X}) < \infty$, then for any $\eps > 0$ there exist an $n \equiv n(\eps)$ and an $(n,2^{nR})$ code with $R < R(\Delta,P_{Y|X},P_{Z|X}) + \eps$ satisfying
\begin{equation}\label{eq:empirical_direct}
\E \big\| \emp_{(X^n,\wh{Y}^n)} - P_{XY} \big\|_\cF \le \Delta + \eps,
\end{equation}
where $\wh{Y}^n = d_n(e_n(X^n),Z^n)$.
\item [b)] {\bf Converse part:} Suppose that there exists an $(n,2^{nR})$-code $\wh{Y}^n = d_n(e_n(X^n),Z^n)$ satisfying
\begin{equation}\label{eq:empirical_converse}
\E \big\| \emp_{(X^n,\wh{Y}^n)} - P_{XY} \big\|_\cF \le \Delta.
\end{equation}
Then $R \ge R(\Delta,P_{Y|X},P_{Z|X})$.
\end{itemize}
\end{theorem}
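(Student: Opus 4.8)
The plan is to follow the Wyner--Ziv template. For the direct part I would absorb the reconstruction map into the test class, reduce the distortion requirement to a joint-typicality statement, and realize that statement by a covering-plus-binning construction driven by the Piggyback Coding Lemma~\ref{lm:piggyback}; for the converse I would run the standard single-letterization together with the time-mixing device of Appendix~\ref{app:time_mixing}.

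\emph{Direct part.} Fix $\eps>0$. Since $R(\Delta)<\infty$, choose $Q\in\cE(\Delta)$ --- with auxiliary space $\cU$, kernel $Q_{U|X}$ making $Y\to X\to U$ a Markov chain, and map $g:\cY\times\cU\to\cX$ such that $W=g(Y,U)$ satisfies $\|Q_{WY}-P\|_\cF\le\Delta$ --- that is near-optimal:
\begin{align*}
I(Q_{XU})-I(Q_{YU}) = I_Q(X;U|Y) < R(\Delta)+\eps/3,
\end{align*}
the equality being the chain-rule identity valid under the Markov condition. Applying hypothesis~\eqref{eq:quantizer} to $\mu=Q_{WY}$ (and absorbing the incurred error into $\eps$), we may in addition assume that $g$ takes finitely many values and that $I(Q_{XU})<\infty$. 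Put $h(y,u)=(g(y,u),y)$, a measurable map $\cY\times\cU\to\cX\times\cY$, and $\cF\circ h=\{f\circ h:f\in\cF\}\subset M^{b,1}(\cY\times\cU)$. Pushing an i.i.d.\ sequence forward through $h$ and using that $\cF$ is GC on $\cX\times\cY$ shows that $\cF\circ h$ is GC on $\cY\times\cU$ (the countable-approximation condition of Remark~\ref{rem:measurability} is inherited), and, with $\wh{X}_i=g(Y_i,\wh{U}_i)$, a change of variables gives the identity
\begin{align*}
\big\|\emp_{(\wh{X}^n,Y^n)}-Q_{WY}\big\|_\cF = \big\|\emp_{(Y^n,\wh{U}^n)}-Q_{YU}\big\|_{\cF\circ h}.
\end{align*}
Since $\|\cdot\|_\cF\le 1$ on differences of probability measures, the triangle inequality against $\|Q_{WY}-P\|_\cF\le\Delta$ reduces the task to the following: for large $n$, construct an $(n,2^{nR})$-code with $R<R(\Delta)+\eps$ whose output $\wh{U}^n$ --- a function of the transmitted index and of $Y^n$ --- satisfies $(Y^n,\wh{U}^n)\in\cT^{(n)}_{\eps,\cF\circ h}(Q_{YU})$ except on an event of probability $<\eps$; then \eqref{eq:empirical_direct} follows.

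\emph{Construction.} Invoke the Piggyback Coding Lemma~\ref{lm:piggyback} (or the general Lemma~\ref{lm:PB} of Appendix~\ref{app:PB}) with the roles ``$U$''$=Y$ (the decoder's side information), ``$V$''$=X$ (observed at the encoder), ``$W$''$=U$, the Markov chain $Y\to X\to U$, and the GC class $\cF\circ h$ on $\cY\times\cU$; here $U$ is only a virtual random variable, and the lemma's map depends on $X^n$ alone. This produces $\Phi_n:\cX^n\to\cU^n$ with range of size at most $2^{n(I(X;U)+\eps/4)}$ and $(Y^n,\Phi_n(X^n))\in\cT^{(n)}_{\eps,\cF\circ h}(Q_{YU})$ with probability $\ge 1-\eps/2$. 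Sending the index of $\Phi_n(X^n)$ would cost rate $I(X;U)$; to reach $I(X;U)-I(Y;U)=I_Q(X;U|Y)$, partition the codewords uniformly at random into $2^{n(I(X;U)-I(Y;U)+\eps)}$ bins, have the encoder transmit the bin index of $\Phi_n(X^n)$, and have the decoder return the codeword $\wh{U}^n$ in the received bin for which $(Y^n,\wh{U}^n)\in\cT^{(n)}_{\eps,\cF\circ h}(Q_{YU})$ (the correct codeword is such a one with probability $\ge 1-\eps/2$), setting $\wh{X}_i=g(Y_i,\wh{U}_i)$. A union bound over the bin shows that, with probability tending to $1$, no spurious codeword in the bin is jointly $\eps$-typical with $Y^n$, the key estimate being that the expected number of $u^n$ with $(Y^n,u^n)\in\cT^{(n)}_{\eps,\cF\circ h}(Q_{YU})$ is $2^{n(I(Y;U)+o(1))}$ --- a consequence of $I(Y;U)=D(Q_{YU}\|Q_Y\otimes Q_U)$ and a change-of-measure bound, and this is exactly where the reduction to a finite reconstruction alphabet via \eqref{eq:quantizer} enters. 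I expect this counting/binning step --- pushing the usual random-binning argument through over an abstract alphabet without recourse to the Markov lemma --- to be the main obstacle; the Piggyback Coding Lemma is what handles the other delicate point, namely that the selected codeword is jointly typical with the side information $Y^n$, for which the finite-alphabet argument uses the Markov lemma.

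\emph{Converse part.} We may assume $R<\infty$. Write $M=e_n(X^n)\in[2^{nR}]$; since $M$ is a function of $X^n$,
\begin{align*}
nR \ge H(M) \ge I(M;X^n|Y^n) = \sum_{i=1}^{n} I\big(M;X_i\,\big|\,X^{i-1},Y^n\big).
\end{align*}
As the pairs $(X_j,Y_j)$ are i.i.d., $X_i$ is conditionally independent of $(X^{i-1},Y^{\setminus i})$ given $Y_i$; hence, with $U_i\deq(M,X^{i-1},Y^{\setminus i})$, the $i$-th summand equals $I(U_i;X_i|Y_i)$, the triple obeys the Markov chain $Y_i\to X_i\to U_i$ (given $X_i$, $Y_i$ is independent of $X^{\setminus i}$ and of $Y^{\setminus i}$, while $M$ is a function of $X^n$), and $\wh{X}_i=d_n(M,Y^n)_i$ is a deterministic function of $(Y_i,U_i)$. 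Let $T$ be uniform on $[n]$ and independent of everything, set $U\deq(U_T,T)$, and let $Q$ be the joint law of $(X_T,Y_T,U)$. Then $Q_{XY}=P_{XY}$, $Y\to X\to U$ is Markov under $Q$, and taking $g(Y_T,(U_T,T))\deq\wh{X}_T$ one has $Q_{WY}=\E\,\emp_{(\wh{X}^n,Y^n)}$, so that by convexity of $\|\cdot\|_\cF$ and \eqref{eq:empirical_converse},
\begin{align*}
\|Q_{WY}-P\|_\cF \le \E\,\big\|\emp_{(\wh{X}^n,Y^n)}-P\big\|_\cF \le \Delta;
\end{align*}
thus $Q\in\cE(\Delta)$. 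Finally, since the conditional law of $X_T$ given $Y_T$ does not depend on $T$, we have $I(X_T;T|Y_T)=0$, and therefore
\begin{align*}
I(Q_{XU})-I(Q_{YU}) = I_Q(X;U|Y) = I(X_T;U_T|Y_T,T) = \frac1n\sum_{i=1}^{n} I(X_i;U_i|Y_i) \le R,
\end{align*}
so $R\ge I(Q_{XU})-I(Q_{YU})\ge R(\Delta)$. The only caveat, as in the remark following the converse of Theorem~\ref{thm:coord}, is that $\cF$ must be well-behaved enough for $\|\emp_{(\wh{X}^n,Y^n)}-P\|_\cF$ to be a genuine random variable.
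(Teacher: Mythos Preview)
Your converse is correct and matches the paper's argument essentially line for line.

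The direct part, however, has a real gap in the binning step, and you have correctly flagged it as ``the main obstacle'' without resolving it. For random binning to succeed, you need that a \emph{wrong} codeword $u^n$ in the received bin --- one that is independent of $Y^n$ under the codebook distribution --- satisfies $(Y^n,u^n)\in\cT^{(n)}_{\eps,\cF\circ h}(Q_{YU})$ with probability at most $2^{-n(I(Y;U)-o(1))}$. This is a packing/joint-AEP estimate, and nothing in the GC framework delivers it: GC only tells you that the empirical law of an i.i.d.\ sample converges to the \emph{true} law in $\|\cdot\|_\cF$, not that it stays exponentially far, in the right sense, from a \emph{different} law. For a sufficiently coarse $\cF$ the typicality test may be nearly vacuous, so the exponent can be arbitrarily small and in particular need not equal $I(Y;U)$. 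Your reduction via \eqref{eq:quantizer} only makes $g$ finite-valued; it does not make $\cU$ or $\cY$ finite, and hence does not rescue the counting argument.

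The paper sidesteps this entirely. It first invokes Lemma~\ref{lm:WZ_approx} (Appendix~\ref{app:WZ_approx}), which uses the quantization hypothesis \eqref{eq:quantizer} to show that one may take \emph{both} the auxiliary alphabet $\cU$ and the side-information alphabet $\cY$ to be finite (via finite measurable partitions), at the cost of an arbitrarily small increase in distortion and in $I(X;U)-I(Y;U)$. With $\cU$ and $\cY$ finite, the paper then runs your Piggyback step to obtain $\Phi_{n_1}:\cX^{n_1}\to\cU^{n_1}$ at rate $I(X;U)+\eps/2$, and compresses the resulting finite-alphabet sequence $\wh{U}^{n_1}$ \emph{losslessly} with a Slepian--Wolf code using $Y^{n_1}$ as side information, achieving rate $n_1^{-1}H(\wh{U}^{n_1}|Y^{n_1})\le I(X;U)-I(Y;U)+\eps/2$. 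No GC packing bound is ever needed: the rate reduction comes from classical finite-alphabet Slepian--Wolf, not from a typicality decoder over an abstract space. This is the ``Wyner--Ziv in the spirit of \cite{Wyn75,WynZiv76,Wyn78}'' that the paper advertises, precisely to avoid the Markov-lemma/packing machinery that your binning route would require.
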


\begin{remark} {\em The quantization assumption \eqref{eq:quantizer} is a ``smoothness" condition on $\cF$, and is akin to an assumption made by Wyner in \cite{Wyn78} in order to extend the achievability part of the finite-alphabet result of \cite{WynZiv76} to abstract alphabets.}
\end{remark}

\begin{IEEEproof}[Proof (direct part)] First we show that, owing to the quantization assumption \eqref{eq:quantizer}, we can assume w.l.o.g.\ that both $\cZ$ and the auxiliary alphabet $\cU$ are finite. This follows from the following lemma, whose proof is given in Appendix~\ref{app:WZ_approx}:
	
	\begin{lemma}\label{lm:WZ_approx}
Consider any law $Q \in \cE(\Delta,P_{Y|X},P_{Z|X})$. Then, for any $\delta > 0$, there exist finite measurable partitions $\{ A_i \}^{N_1}_{i=1}$ and $\{ B_j \}^{N_2}_{j=1}$ of $\cZ$ and $\cU$ and a function $g_1 : \cZ \times \cU \to \cY$ such that:
	\begin{enumerate}
	\item [a)] $\| Q_{XW_1} - P_{XY} \|_\cF \le \Delta + \delta$, where $W_1 = g_1(Z,U)$
	\item [b)] $g_1$ is constant on the rectangles $A_i \times B_j$, $1 \le i \le N_1, 1 \le j \le N_2$
	\item [c)] $I(Q_{X\td{U}}) - I(Q_{\td{Z}\td{U}}) \le I(Q_{XU}) - I(Q_{ZU}) + \delta$ where $\td{Z} = i$ for $Z \in A_i$ and $\td{U} = j$ for $U \in B_j$.
	\end{enumerate}
	\end{lemma}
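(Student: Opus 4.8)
The plan is to obtain $\{A_i\}$, $\{B_j\}$, $g_1$ by a sequence of ``finitizations,'' using the observation that \emph{refining} the partitions $\{A_i\}$ and $\{B_j\}$ can never destroy (a) or (b), so that (c) can be secured last. For the first step I would apply the quantization hypothesis \eqref{eq:quantizer} to the joint law $\mu = Q_{WY}$ of $(W,Y) = (g(Y,U),Y)$ on $\cX\times\cY$, with accuracy $\delta/3$: this produces a finite set $\{\wh x_1,\dots,\wh x_N\}\subset\cX$ and a quantizer $q:\cX\to\{\wh x_j\}$ with $\|Q_{q(W)Y}-Q_{WY}\|_\cF\le\delta/3$, so that $g'\deq q\circ g$ and $W'\deq g'(Y,U)=q(W)$ satisfy: $g'$ takes only the values $\wh x_j$, and $\|Q_{W'Y}-P\|_\cF\le\Delta+\delta/3$ by the triangle inequality.

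Next I would make $g'$ constant on rectangles. The level sets $E_j\deq(g')^{-1}(\wh x_j)$, $j=1,\dots,N$, form a finite measurable partition of $\cY\times\cU$; since $\cB_{\cY\times\cU}=\cB_\cY\otimes\cB_\cU$ with both factors countably generated, I pick an increasing sequence of finite product (``grid'') partitions $\cG_k$ of $\cY\times\cU$ whose generated $\sigma$-algebras increase to $\cB_{\cY\times\cU}$. By the martingale convergence theorem, $\E_{Q_{YU}}[1_{E_j}\mid\sigma(\cG_k)]\to 1_{E_j}$ $Q_{YU}$-a.s.\ for each $j$, and since $\sum_j 1_{E_j}=1$ this forces $\E_{Q_{YU}}\big[1-\max_j\E_{Q_{YU}}[1_{E_j}\mid\sigma(\cG_k)]\big]\to 0$. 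Hence for $k$ large enough I can fix $\cG_k$, take $\{A_i\}$ and $\{B_j\}$ to be its $\cY$- and $\cU$-marginal partitions, and define $g_1$ to equal $\wh x_{j}$ on each cell $C$ of $\cG_k$, where $j=j(C)$ maximizes $Q_{YU}(C\cap E_j)$ (ties broken arbitrarily). Then $g_1$ is constant on the rectangles $A_i\times B_j$ --- this is (b) --- and $Q_{YU}(g_1\ne g')<\delta/6$. Since $\cF\subseteq M^{b,1}(\cX\times\cY)$, the seminorm $\|\cdot\|_\cF$ is dominated by the total variation distance, so for $W_1\deq g_1(Y,U)$,
\begin{align}
\|Q_{W_1Y}-Q_{W'Y}\|_\cF \le \dvar{Q_{W_1Y}}{Q_{W'Y}} \le 2\,Q_{YU}(g_1\ne g') < \delta/3 ,
\end{align}
and one more triangle inequality gives $\|Q_{W_1Y}-P\|_\cF\le\Delta+\delta$, which is (a).

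For (c) I would refine the $\cY$-side once more. One may assume $I(Q_{XU})<\infty$ (else (c) is vacuous), so $I(Y;\td U)\le I(X;\td U)\le I(X;U)<\infty$, where $\td U$ is the finite-valued quantization of $U$ induced by $\{B_j\}$. Using the characterization of $I(Y;\td U)$ as the supremum of $I(\td Y';\td U)$ over finite quantizations $\td Y'$ of $Y$, I pick $\{A'_i\}$ with $I(\td Y';\td U)\ge I(Y;\td U)-\delta$ and replace $\{A_i\}$ by its common refinement with $\{A'_i\}$; this only subdivides the rectangles $A_i\times B_j$, so $g_1$, (a), and (b) are untouched, while now $I(Y;\td U\mid\td Y)=I(Y;\td U)-I(\td Y;\td U)\le\delta$. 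Then (c) is read off from the chain-rule identity
\begin{align}
& \big[I(X;\td U)-I(\td Y;\td U)\big]-\big[I(X;U)-I(Y;U)\big] \nonumber \\
& \qquad = I(Y;\td U\mid\td Y)+I(Y;U\mid\td U)-I(X;U\mid\td U) ,
\end{align}
since, $\td U$ being a function of $U$ and $Y\to X\to U$ a Markov chain, the disintegrations satisfy $Q_{Y|XU}(\cdot\mid x,u)=Q_{Y|X}(\cdot\mid x)=Q_{Y|X\td U}(\cdot\mid x,\td u)$, so $Y\to X\to U$ stays Markov conditionally on $\td U$ and the conditional data processing inequality gives $I(Y;U\mid\td U)\le I(X;U\mid\td U)$; hence the last two terms contribute at most $0$, and the difference is at most $\delta$.

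The main obstacle is precisely this last step. Quantizing $U$ \emph{lowers} $I(Y;U)$, and only $I(X;U)$ is available to absorb the loss; what makes the argument work is that the coarsening-in-$U$ loss is matched exactly by the data-processing slack $I(X;U\mid\td U)-I(Y;U\mid\td U)\ge 0$, leaving only the coarsening-in-$Y$ term $I(Y;\td U\mid\td Y)$, which can be pushed below $\delta$ by refining $\{A_i\}$ alone. Getting the chain-rule bookkeeping, together with the conditional Markov property it rests on, exactly right is the crux; by comparison the rectangle approximation is routine given the countable generation of $\cB_\cY$ and $\cB_\cU$, and the range reduction of $g$ is an immediate invocation of \eqref{eq:quantizer}.
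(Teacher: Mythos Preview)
Your proposal is correct and follows essentially the same architecture as the paper's proof: first reduce $g$ to a finite-valued $g_0=q\circ g$ via the quantization hypothesis \eqref{eq:quantizer}, then approximate the level sets of $g_0$ by finite unions of rectangles, and finally control the resulting $\|\cdot\|_\cF$-perturbation by a total-variation-type bound. The paper carries out the rectangle approximation by invoking Lemma~5.4 of Wyner \cite{Wyn78}, whereas you obtain it from martingale convergence along an increasing sequence of grid partitions and a majority-vote rule on cells; both are standard routes to the same conclusion, and yours has the mild advantage of being self-contained rather than pointing to an external lemma.

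For parts (b) and (c) the paper simply says they follow exactly as in \cite{Wyn78}, so your explicit argument for (c) goes beyond what is written here. Your chain-rule identity
\[
\big[I(X;\td U)-I(\td Y;\td U)\big]-\big[I(X;U)-I(Y;U)\big]=I(Y;\td U\mid\td Y)+I(Y;U\mid\td U)-I(X;U\mid\td U)
\]
is correct, and the key observation --- that $Y\to X\to U$ remains a Markov chain conditionally on $\td U$ because $\td U$ is a function of $U$, so $I(Y;U\mid\td U)\le I(X;U\mid\td U)$ --- is exactly the point Wyner uses. Your device of refining only $\{A_i\}$ at the end to force $I(Y;\td U\mid\td Y)\le\delta$, while noting that such refinement cannot disturb (a) or (b), is clean and matches the spirit of \cite{Wyn78}.
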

	
Let us therefore assume that $\cU$ and $\cZ$ are both finite. We will use a Wyner--Ziv style two-step argument \cite{WynZiv76,Wyn78}: The first step consists of using a long block code that preserves typicality (following Lemma~\ref{lm:piggyback}), while the second step uses a Slepian--Wolf code \cite{SleWol73} to communicate the codewords with negligible probability of error. Pick any $Q \in \cE(\Delta,P_{Y|X},P_{Z|X})$ such that
\begin{align*}
	I(Q_{XU}) - I(Q_{ZU}) < R(\Delta,P_{Y|X},P_{Z|X}) + \eps/2.
\end{align*}
Define a function $\bar{g} : \cX \times \cZ \times \cU \to \cX \times \cY$ by $\bar{g}(x,z,u) \deq (x,g(z,u))$. Consider the function class $\cF \circ \bar{g} \subset M^{b,1}(\cX \times \cZ \times \cU)$. Since $\cF$ is a GC class, so is $\cF \circ \bar{g}$ --- to see this, fix any $\mu \in \cP(\cX \times \cZ \times \cU)$ and let $\{(X_i,Z_i,U_i)\}^\infty_{i=1}$ be a sequence of i.i.d.\ draws from $\mu$. Then for any $n$ we can write
\begin{align*}
&	\| \emp_{(X^n,Z^n,U^n)} - \mu \|_{\cF \circ \bar{g}} \nonumber\\
	 &= \sup_{f \in \cF} \left| \frac{1}{n}\sum^n_{i=1}f(X_i,g(Z_i,U_i)) - \E f(X,g(Z,U)) \right| \\
	&= \sup_{f \in \cF} \left| \frac{1}{n}\sum^n_{i=1} f(X_i,W_i) - \E f(X,W)\right| \\
	&\equiv \| \emp_{(X^n,W^n)} - \mu_{XW} \|_\cF,
\end{align*}
where $W = g(Z,U)$. Thus, the GC property of $\cF \circ \bar{g}$ follows from the GC property of $\cF$.\footnote{By contrast, in order for the GC property to be preserved under {\em left} compositions, i.e.,~for $\psi \circ \cF$ to be a GC class for some $\psi : [0,1] \to [0,1]$, additional requirements must be imposed on $\psi$ (such as monotonicity or Lipschitz continuity).} In view of this, we can apply Lemma~\ref{lm:piggyback} to the Markov chain $(X,Z) \to X \to U$ and to the GC class $\cF \circ \bar{g}$ to derive the existence of a large enough $n_1$ and a mapping $\Phi_{n_1} : \cX^{n_1} \to \cU^{n_1}$, such that
\begin{align*}
\frac{1}{n_1} \log |\{\Phi_{n_1}(\cX^{n_1})\}| \le I(Q_{XU}) + \eps/2
\end{align*}
and
\begin{align*}
& \E \big \| \emp_{(X^{n_1},Z^{n_1},\wh{U}^{n_1})} - Q_{XZU} \big\|_{\cF \circ \bar{g}} \nonumber\\
&\qquad = \E \big \| \emp_{(X^{n_1},\wh{W}^{n_1})} - Q_{XW} \big \|_\cF \le \eps/2,
\end{align*}
where
\begin{align*}
\wh{W}^{n_1} &= \big(g(Z_1,\wh{U}_1),\ldots,g(Z_{n_1},\wh{U}_{n_1})\big)\\
\wh{U}^{n_1} &= \Phi_{n_1}(X^{n_1}).
\end{align*}
We can use a blocking argument along the lines of Lemmas~3 and 5 of Wyner and Ziv \cite{WynZiv76} to show that a sufficiently long sequence $\wh{U}^{n_1}(1),\ldots,\wh{U}^{n_1}(n_2)$ of i.i.d.\ realizations of $\wh{U}^n$ can be losslessly encoded, using a Slepian--Wolf code, at a rate of
\begin{align*}
\frac{1}{n_1} H\big(\wh{U}^{n_1}\big|Z^{n_1}\big) &\le I(Q_{XU}) - I(Q_{ZU}) + \eps/2 \\
&< R(\Delta,P_{Y|X},P_{Z|X}) + \eps.
\end{align*}
Let $n=n_1n_2$, and let $\{\td{U}_i\}^{n}_{i=1}$ denote the resulting decoding. Then, if $n_2$ is large enough, we can guarantee that
\begin{align*}
	\E \big \| \emp_{(X^n,Z^n,\td{U}^n)} - \emp_{(X^n,Z^n,\wh{U}^n)} \big \|_{\cF \circ \bar{g}} \le \eps/2,
\end{align*}
and therefore, with $\wh{Y}^n = \big( g(Z_1,\td{U}_1),\ldots,g(Z_n,\td{U}_n)\big)$, that
\begin{align*}
& \E \big \| \emp_{(X^n,\wh{Y}^n)} - Q_{XW} \big \|_\cF \\
&\qquad = \E \big \| \emp_{(X^n,Z^n,\td{U}^n)} - Q_{XZU} \big\|_{\cF \circ \bar{g}} \\
&\qquad \le \E \big \| \emp_{(X^n,Z^n,\td{U}^n)} - \emp_{(X^n,Z^n,\wh{U}^n)} \big\|_{\cF \circ \bar{g}} \nonumber\\
&\qquad \qquad + \E \big\| \emp_{(X^n,Z^n,\wh{U}^n)} - Q_{XZU} \big\|_{\cF \circ \bar{g}} \\
&\qquad \le \eps.
\end{align*}
The triangle inequality then yields
\begin{align*}
& \E \big \| \emp_{(X^n,\wh{Y}^n)} - P_{XY} \big \|_\cF \nonumber \\
& \qquad \le
\E \big \| \emp_{(X^n,\wh{Y}^n)} - Q_{XW} \big \|_\cF + \big \| Q_{XW} - P_{XY} \big \|_\cF \\
& \qquad \le \Delta + \eps.
\end{align*}
Thus, we have constructed a $(n,2^{nR})$-code with rate $R < R(\Delta,P_{Y|X},P_{Z|X}) + \eps$.
\end{IEEEproof}

\begin{IEEEproof}[Proof (converse part)] To prove the converse, we again use time mixing. Let $(e_n,d_n)$ be an $(n,2^{nR})$ code, let $J = e_n(X^n)$ and $\wh{Y}^n = d_n(J,Z^n)$, and let $T$ be uniformly distributed on $[n]$ independently of $(X^n,Z^n)$. Define an auxiliary random variable
\begin{align*}
U = (J,X^{T-1},Z^{T-1},Z^n_{T+1},T)
\end{align*}
(cf.~\cite{WynZiv76,Wyn78,CufPerCov09}) and note that $Z_T \to X_T \to U$ is a Markov chain. Moreover,
\begin{align*}
	nR &\stackrel{{\rm (a)}}{\ge} H(J)  \\
	&\ge H(J|Z^n) \\
	&= I(X^n; J | Z^n) \\
	&= \sum^n_{t=1} I(X_t; J | Z^n, X^{t-1}) \\
	&\stackrel{{\rm (b)}}{=} \sum^n_{t=1} I(X_t; J,X^{t-1},Z^{t-1},Z^n_{t+1} | Z_t) \\
	&\stackrel{{\rm (c)}}{=} nI(X_T; J, X^{T-1},Z^{T-1},Z^n_{T+1} | Z_T,T) \\
	&\stackrel{{\rm (d)}}{=} nI(X_T; J, X^{T-1},Z^{T-1}, Z^n_{T+1}, T | Z_T) \\
	&= nI(X_T; U | Z_T),
\end{align*}
where:
\begin{itemize}
	\item (a) holds because the log-cardinality of the range of $e_n(\cdot)$ is bounded by $nR$
	\item (b) follows from the chain rule and the fact that $X_t \to Z_t \to (X^{t-1},Z^{t-1},Z^n_{t+1})$ is a Markov chain
	\item (c) follows from the construction of $T$
	\item (d) follows because, by the chain rule,
	\begin{align*}
		& I(X_T; J,X^{T-1},Z^{T-1},Z^n_{T+1}, T | Z_T)  \\
		&= I(X_T; T | Z_T) + I(X_T; J,X^{T-1},Z^{T-1},Z^{n}_{T+1} | Z_T,T)
	\end{align*}
	where the first term on the r.h.s.\ is zero because $(X_1,Z_1),\ldots,(X_n,Z_n)$ are i.i.d., so $(X_T,Z_T)$ is independent of $T$ (see Fact~1 in Appendix~\ref{app:time_mixing}).
\end{itemize}
The remaining steps are consequences of other definitions and standard information-theoretic identities.

Since $\{(X_i,Z_i)\}^n_{i=1}$ are i.i.d., $(X_T,Z_T)$ has the same joint law as $(X_1,Z_1)$, namely $P_{XZ}$. Moreover, $\wh{Y}_T$ is a deterministic function of $(Z_T,U)$, and $\E \emp_{(X^n,\wh{Y}^n)} = P_{(X_T,\wh{Y}_T)}$. Finally,
\begin{align*}
\big\| P_{(X_T,\wh{Y}_T)} - P_{XY} \big\|_\cF &= \big \| \E \emp_{(X^n,\wh{Y}^n)} - P_{XY} \big\|_\cF \\
& \stackrel{{\rm (a)}}{\le} \E \big\| \emp_{(X^n,\wh{Y}^n)} - P_{XY} \big\|_\cF  \\
&\stackrel{{\rm (b)}}{\le} \Delta,
\end{align*}
where (a) follows from convexity, and (b) follows from \eqref{eq:empirical_converse}. Hence, the joint law of $X_T$, $Z_T$, and $U$ belongs to $\cE(\Delta,P_{Y|X},P_{Z|X})$, which means that $R \ge I(X_T; U | Z_T) \ge R(\Delta,P_{Y|X},P_{Z|X})$.
\end{IEEEproof}

\subsection{Lossy coding with respect to a class of distortion measures}
\label{ssec:DW}

Finally, we consider the problem of lossy coding with respect to a class of distortion measures (fidelity criteria). For general (Polish) alphabets, it was solved by Dembo and Weissman \cite{DemWei03}, but the finite-alphabet variant appears already as Problem~14 in \cite{CsiKor81}. Let $\cX$ and $\cY$ denote the source and the reproduction alphabets, respectively. Suppose a class $\Gamma$ of distortion measures $\rho : \cX \times \cY \to [0,1]$ is given, together with a class of nonnegative reals indexed by $\rho \in \Gamma$, $\{\Delta_\rho\}_{\rho \in \Gamma}$. The goal is to find a block code of minimal rate whose expected distortion under each $\rho \in \Gamma$ is bounded by the corresponding $\Delta_\rho$. We use the same definition of an $(n,M)$-code as in Section~\ref{ssec:coord}.

Define a mapping $F(\cdot,\{\Delta_\rho\}) : \cP(\cX \times \cY) \to \R$ by
\begin{align*}
	F(Q,\{\Delta_\rho\}) \deq \sup_{\rho \in \Gamma} [Q(\rho) - \Delta_\rho],
\end{align*}
where
\begin{align*}
	Q(\rho) = \int \rho dQ = \int \rho(x,y) Q(dx,dy)
\end{align*}
is the expected distortion between $X$ and $Y$ when they have joint law $Q$.

\begin{definition}  Given a source $P_X \in \cP(\cX)$, let $\cE(\{\Delta_\rho\})$ denote the set of all $Q \in \cP(\cX \times \cY)$ such that
	\begin{align*}
		Q_X = P_X \quad\text{and}\quad F(Q,\{\Delta_\rho\}) \le 0.
		\end{align*}
		Define the {\em rate-distortion function}
\begin{align*}
R(\{\Delta_\rho\}) \deq \inf_{Q \in \cE(\{\Delta_\rho\})} I(Q).
\end{align*}
\end{definition}

Theorem~1 of \cite{DemWei03} shows that any rate $R \ge R(\{\Delta_\rho\})$ is achievable, provided the mapping $Q \mapsto F(Q,\{\Delta_\rho\})$ is upper semicontinuous (u.s.c.) under the weak topology on $\cP(\cX \times \cY)$. Moreover, no rate $R < R(\{\Delta_\rho\})$ is achievable. We now show that the u.s.c.\ requirement can be replaced by a GC condition:

\begin{theorem}  Let $\Gamma$ be a class of distortion measures and $\{\Delta_\rho\}_{\rho \in \Gamma}$ a class of nonnegative distortion levels.
\begin{itemize}
\item [a)] {\bf Direct part:} If $\Gamma$ is a GC class and $R(\{\Delta_\rho\}) < \infty$, then for any $\eps > 0$, there exist an $n \equiv n(\eps)$ and an $(n,2^{nR})$ code with $R < R(\{\Delta_\rho\}) + \eps$ satisfying
\begin{equation}\label{eq:manydist_direct}
\E \sup_{\rho \in \Gamma} \left[ \rho(X^n, \wh{Y}^n) - \Delta_\rho \right] \le  \eps,
\end{equation}
where $\rho(X^n,\wh{Y}^n) \deq \emp_{(X^n,\wh{Y}^n)}(\rho)$.
\item [b)] {\bf Converse part:} Suppose that there exists an $(n,2^{nR})$-code $\wh{Y}^n = d_n(e_n(X^n))$ satisfying
\begin{equation}\label{eq:manydist_converse}
\E \rho(X^n,\wh{Y}^n) \le \Delta_\rho, \qquad \forall \rho \in \Gamma.
\end{equation}
Then $R \ge R(\{\Delta_\rho\})$.
\end{itemize}
\end{theorem}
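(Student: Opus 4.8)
The plan is to recycle, almost verbatim, the proof of Theorem~\ref{thm:coord}: both the achievability (via Lemma~\ref{lm:piggyback}) and the converse (via the time-mixing argument) go through once we observe that the new operational quantity $\E \sup_{\rho \in \Gamma}[\rho(X^n, \wh{Y}^n) - \Delta_\rho]$ is controlled by the familiar GC-seminorm distance $\E \| \emp_{(X^n, \wh{Y}^n)} - Q \|_\Gamma$ for a suitably chosen target law $Q$.

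For the \textbf{direct part} I would fix $\eps > 0$, pick $Q \in \cE(\{\Delta_\rho\})$ with $I(Q) < R(\{\Delta_\rho\}) + \eps/2$ (possible since $R(\{\Delta_\rho\}) < \infty$), and note that $F(Q, \{\Delta_\rho\}) \le 0$ means $Q(\rho) \le \Delta_\rho$ for every $\rho \in \Gamma$. The key elementary estimate is then, for any tuple $(x^n, y^n)$,
\begin{align*}
\sup_{\rho \in \Gamma}\big[ \emp_{(x^n, y^n)}(\rho) - \Delta_\rho \big]
&\le \sup_{\rho \in \Gamma}\big[ \emp_{(x^n, y^n)}(\rho) - Q(\rho) \big] \\
&\le \big\| \emp_{(x^n, y^n)} - Q \big\|_\Gamma.
\end{align*}
Letting $(X, U)$ have joint law $Q$, so that $X \to X \to U$ is a degenerate Markov chain with $I(X;U) = I(Q) < \infty$, I would apply Lemma~\ref{lm:piggyback} to this chain and to the GC class $\Gamma \subset M^{b,1}(\cX \times \cY)$ (the distortion measures take values in $[0,1] \subset [-1,1]$, so $\Gamma$ is a legitimate GC class of the required type) to produce an $n$ and a map $\Phi_n : \cX^n \to \cY^n$ whose range has log-cardinality at most $n(I(Q) + \eps/2) < n(R(\{\Delta_\rho\}) + \eps)$ and for which $(X^n, \Phi_n(X^n))$ is $\eps$-typical w.r.t.\ $\Gamma$ for $Q$ with probability close to one. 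Setting $\wh{Y}^n = \Phi_n(X^n)$ and converting the high-probability typicality guarantee into an expectation bound via boundedness of $\Gamma$ (exactly as in the proof of Theorem~\ref{thm:coord}), the displayed estimate yields $\E \sup_{\rho \in \Gamma}[\rho(X^n, \wh{Y}^n) - \Delta_\rho] \le \E \| \emp_{(X^n, \wh{Y}^n)} - Q \|_\Gamma \le \eps$, which is \eqref{eq:manydist_direct}. (As usual I would invoke Remark~\ref{rem:measurability} to ensure the left-hand side is a bona fide random variable.) In contrast to Dembo and Weissman \cite{DemWei03}, no upper-semicontinuity of $Q \mapsto F(Q,\{\Delta_\rho\})$ is needed; the GC hypothesis on $\Gamma$ does all the work.

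For the \textbf{converse part} I would reuse the time-mixing argument from the converse of Theorem~\ref{thm:coord}: given an $(n, 2^{nR})$ code $\wh{Y}^n = d_n(e_n(X^n))$ satisfying \eqref{eq:manydist_converse}, let $T$ be uniform on $[n]$ independent of $X^n$ and let $\wh{Q}$ be the law of $(X_T, \wh{Y}_T)$; the inequalities \eqref{eq:coord_conv_1}--\eqref{eq:coord_conv_7} transfer without change and give $R \ge I(\wh{Q})$. Since $X^n$ is i.i.d.\ $P_X$ we get $\wh{Q}_X = P_X$, and since $\E \emp_{(X^n, \wh{Y}^n)} = \wh{Q}$ (Fact~2 in Appendix~\ref{app:time_mixing}), linearity gives $\wh{Q}(\rho) = \E\, \emp_{(X^n, \wh{Y}^n)}(\rho) = \E\, \rho(X^n, \wh{Y}^n) \le \Delta_\rho$ for every $\rho \in \Gamma$; hence $F(\wh{Q}, \{\Delta_\rho\}) = \sup_{\rho \in \Gamma}[\wh{Q}(\rho) - \Delta_\rho] \le 0$, so $\wh{Q} \in \cE(\{\Delta_\rho\})$ and $R \ge I(\wh{Q}) \ge R(\{\Delta_\rho\})$. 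I do not expect a genuine obstacle here, since the heavy lifting is already contained in Lemma~\ref{lm:piggyback} and Theorem~\ref{thm:coord}; the only points demanding a little care are (i) the one-line reduction in the first display — choosing $Q$ inside $\cE(\{\Delta_\rho\})$ so the ``$-\Delta_\rho$'' can be absorbed, which is what makes the empirical-process machinery applicable — and (ii) the routine passage from the probabilistic typicality bound of Lemma~\ref{lm:piggyback} to the expectation bound in \eqref{eq:manydist_direct}, which is legitimate precisely because every $\rho \in \Gamma$ is $[0,1]$-valued, so $\| \emp_{(X^n, \wh{Y}^n)} - Q \|_\Gamma \le 1$ deterministically.
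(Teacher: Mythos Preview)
Your proposal is correct and follows essentially the same route as the paper. The direct part is virtually identical: the paper also picks $Q \in \cE(\{\Delta_\rho\})$ with $I(Q) < R(\{\Delta_\rho\}) + \eps/2$, invokes the argument of Theorem~\ref{thm:coord}/Lemma~\ref{lm:piggyback} to obtain $\E\|\emp_{(X^n,\wh{Y}^n)} - Q\|_\Gamma \le \eps$, and then uses your key one-line estimate (stated there as $\rho(X^n,\wh{Y}^n) - \Delta_\rho \le \|\emp_{(X^n,\wh{Y}^n)} - Q\|_\Gamma + F(Q,\{\Delta_\rho\})$ with $F(Q,\{\Delta_\rho\}) \le 0$); for the converse the paper simply cites \cite{DemWei03}, whose standard rate-distortion converse is exactly the time-mixing computation you wrote out.
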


\begin{IEEEproof} To prove the direct part, pick any $Q \in \cE(\{\Delta_\rho\})$ such that $I(Q) < R(\{\Delta_\rho\}) + \eps/2$. Let $X \in \cX$ and $U \in \cY$ have joint law $Q$. The same argument as in the proof of Theorem~\ref{thm:coord} can be used to show the existence of a large enough $n$ and a mapping $\Phi_n : \cX^n \to \cY^n$, such that
\begin{align*}
\frac{1}{n} \log|\{\Phi_n(\cX^n)\}| &\le I(Q) + \eps/2 < R(\{\Delta_\rho\}) + \eps
\end{align*}
and
\begin{align*}
\E \big\| \emp_{(X^n,\wh{Y}^n)} - Q \big \|_\Gamma \le \eps,
\end{align*}
where $\wh{Y}^n = \Phi_n(X^n)$. Now, for any $\rho \in \Gamma$ we have
\begin{align*}
&\rho(X^n,\wh{Y}^n) - \Delta_\rho \le \| \emp_{(X^n,\wh{Y}^n)} - Q \|_\Gamma + F(Q,\{\Delta_\rho\}).
\end{align*}
Consequently, taking the supremum of both sides over $\Gamma$ and then the expectation w.r.t.\ $P_{X^n}$, we get (\ref{eq:manydist_direct}).

The proof of the converse is exactly the same as in \cite{DemWei03}.
\end{IEEEproof}

\section{Conclusion}
\label{sec:conclusion}

We have proposed a new definition of typical sequences over a wide class of abstract alphabets (standard Borel spaces), which retains many useful properties of strong (total-variation) typicality for finite alphabets. In particular, it is preserved in a Markov structure, which has allowed us to develop transparent achievability proofs in several settings pertaining to empirical coordination of actions in a two-node network using finite communication resources. Here are some directions for future research:
\begin{itemize}
	\item {\em Behavior in the finite block length regime} --- GC classes with sufficiently ``regular'' metric or combinatorial structure admit sharp concentration-of-measure inequalities of the form
	\begin{align*}
		\Pr\left( \| \emp_{Z^n} - P \|_\cF \ge \eps\right) \le S(n;\cF) e^{-Cn\eps^2},
	\end{align*}
	where $C > 0$ is some constant and $S(n;\cF)$ is a function of ``moderate'' growth in $n$, which typically depends on the geometric characteristics of $\cF$ \cite{Pol84,WaaWel96,vanDeGeer00}. For example, if $\cF$ is a VC class, then $S(n;\cF) = O(n^{V(\cF)})$; in the latter case, we also have
	\begin{align*}
		\E \big\| \emp_{Z^n} - P \big\|_\cF \le C\sqrt{\frac{V(\cF)}{n}},
	\end{align*}
	where $C > 0$ is a universal constant. These inequalities can be used to investigate the behavior of our coding schemes in the finite block length regime (e.g.,~the rate of convergence of the achievable $\| \cdot \|_\cF$-distortion to the optimum).
	\item {\em Extension to stationary ergodic sources} --- Recently, Adams and Nobel \cite{AdaNob10} have shown that the ULLN holds for countable (or separable) classes of VC sets and functions even when the underlying process is stationary and ergodic (rather than i.i.d.), although without any specific guarantees on the rate of convergence. Their work opens the possibility of extending our GC typicality approach to stationary ergodic sources via sliding block codes \cite{Dun80,Kie80,Kie81}.
	\item {\em Connections to simulation of information sources} --- The operational criteria used in our treatment of empirical coordination suggest new ways of thinking about simulation of random processes and related problems in rate-distortion coding \cite{HanVer93,SteVer96,MGL10,CufPerCov09}. Many problems related to sensing, learning, and control under communication constraints can be reduced (or related) to simulation of random processes, and our formalism may be of use for characterizing the fundamental information-theoretic limits in these settings.
\end{itemize}

\begin{appendices}
	\renewcommand{\theequation}{\Alph{section}.\arabic{equation}}
	\setcounter{lemma}{0}
	\setcounter{equation}{0}

	\renewcommand{\thelemma}{\Alph{section}.\arabic{lemma}}

\section{Piggyback Coding Lemma for Borel spaces}
\label{app:PB}

In this appendix we prove the following lemma, which is an extension of the Piggyback Coding lemma of Wyner \cite[Lemma~4.3]{Wyn75} to general alphabets:

\begin{lemma}\label{lm:PB} Let $\cU,\cV,\cW$ be standard Borel spaces, and let $(U,V,W) \in \cU \times \cV \times \cW$ be a triple of random variables with joint law $P_{UVW}$, such that $U \to V \to W$ is a Markov chain and the mutual information $I(V; W)$ is finite. Let $\{(U_i,V_i,W_i)\}^\infty_{i=1}$ be a sequence of i.i.d.\ draws from $P_{UVW}$. Let $\{ \psi_n \}^\infty_{n=1}$ be a sequence of measurable functions $\psi_n : \cU^n \times \cW^n \to [0,1]$, such that
\begin{align*}
\lim_{n \to \infty} \E \psi_n(U^n,W^n) = 0.
\end{align*}
For a given $\eps > 0$, there exists $n_0 = n_0(\eps)$, such that for every $n \ge n_0$ we can find a mapping $F_n: \cV^n \to \cW^n$ that satisfies
\begin{align*}
\frac{1}{n} \log \left| \Big\{ F_n(v^n) : v^n \in \cV^n \Big\} \right| \le I(V;W) + \eps
\end{align*}
and
\begin{align*}
\E \psi_n(U^n,F_n(V^n)) \le \eps.
\end{align*}
\end{lemma}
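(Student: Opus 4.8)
The plan is to remove the ``nuisance'' variable $U$ by conditioning, reducing the lemma to a covering statement about the pair $(V,W)$ alone --- one that uses nothing about $U$ beyond the Markov property, and in particular does not need any Markov lemma. \emph{Step 1 (conditioning out $U$).} Disintegrate $P_{UVW}$; since $U\to V\to W$, the conditional law of $U^n$ given $(V^n,W^n)$ depends only on $V^n$, with $P_{U^n|V^n}(\cdot\,|\,v^n)=\bigotimes_{i=1}^n P_{U|V}(\cdot\,|\,v_i)$. Define
\begin{align}
	g_n(v^n,w^n)\deq \int_{\cU^n}\psi_n(u^n,w^n)\,\Big(\textstyle\bigotimes_{i=1}^n P_{U|V}(\cdot\,|\,v_i)\Big)(du^n),
\end{align}
a measurable map $\cV^n\times\cW^n\to[0,1]$ (joint measurability of integrals against Markov kernels is a routine monotone-class fact in standard Borel spaces). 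Then $\E\,\psi_n(U^n,W^n)=\E\,g_n(V^n,W^n)$ with $(V^n,W^n)\sim P_{VW}^{\otimes n}$, and $\E\,\psi_n\big(U^n,F_n(V^n)\big)=\E\,g_n\big(V^n,F_n(V^n)\big)$ with $V^n\sim P_V^{\otimes n}$ for any measurable $F_n:\cV^n\to\cW^n$. So it suffices to construct $F_n$ with $\tfrac1n\log|F_n(\cV^n)|\le I(V;W)+\eps$ and $\E\,g_n(V^n,F_n(V^n))\le\eps$, knowing only that $\delta_n\deq\E\,g_n(V^n,W^n)\to0$.

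\emph{Step 2 (random covering codebook).} By Markov's inequality the set $A_n\deq\{(v^n,w^n):g_n(v^n,w^n)\le\sqrt{\delta_n}\}$ satisfies $P_{VW}^{\otimes n}(A_n)\ge1-\sqrt{\delta_n}$, so it is enough to arrange $(v^n,F_n(v^n))\in A_n$ for all but a vanishing $P_V^{\otimes n}$-fraction of $v^n$; then $0\le g_n\le1$ gives $\E\,g_n(V^n,F_n(V^n))\le\sqrt{\delta_n}+o(1)\le\eps$ for $n$ large. Fix $R=I(V;W)+\eps/2$, draw a codebook $\{\mathbf W^n(\ell)\}_{\ell=1}^{2^{nR}}$ i.i.d.\ from $P_W^{\otimes n}$, and set $F_n(v^n)=\mathbf W^n(\ell)$ for any $\ell$ with $(v^n,\mathbf W^n(\ell))\in A_n$, and $F_n(v^n)=\mathbf W^n(1)$ otherwise. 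Since $I(V;W)<\infty$ we have $P_{VW}\ll P_V\otimes P_W$ with information density $\imath\deq\log\frac{dP_{VW}}{d(P_V\otimes P_W)}$ and $\E\,\imath(V,W)=I(V;W)$; intersecting $A_n$ with the $\imath$-typical set $\{(v^n,w^n):\sum_i\imath(v_i,w_i)\le n(I(V;W)+\eps/4)\}$ (whose $P_{VW}^{\otimes n}$-probability $\to1$ by the weak law for the i.i.d.\ sequence $\imath(V_i,W_i)$, after a truncation controlling its tails) shows that for $P_V^{\otimes n}$-most $v^n$ the corresponding $w^n$-slice carries $P_W^{\otimes n}$-mass at least $c\,2^{-n(I(V;W)+\eps/4)}$ for a constant $c>0$. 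The probability that none of the $2^{nR}$ i.i.d.\ codewords lands in that slice is then at most $\big(1-c\,2^{-n(I(V;W)+\eps/4)}\big)^{2^{nR}}\le\exp\!\big(-c\,2^{n\eps/4}\big)\to0$; averaging over the codebook, the expected $P_V^{\otimes n}$-fraction of ``bad'' $v^n$ tends to $0$, so for every large $n$ some fixed codebook works.

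\emph{Wrap-up and the main obstacle.} For such a codebook $F_n$ is measurable (finitely many indices, $g_n$ measurable), $|F_n(\cV^n)|\le2^{nR}$ so $\tfrac1n\log|F_n(\cV^n)|\le R\le I(V;W)+\eps$, and by Steps 1--2, $\E\,\psi_n(U^n,F_n(V^n))=\E\,g_n(V^n,F_n(V^n))\le\eps$ once $n\ge n_0(\eps)$, which is the claim. The one non-routine ingredient is the covering estimate of Step 2 in an abstract standard Borel alphabet: one must work directly with the Radon--Nikodym density $\imath$ (integrable, but with possibly heavy tails under the product measure, which forces the truncation step), check measurability of the slices of $A_n$, and show they retain enough $P_W^{\otimes n}$-mass --- this is exactly where a finite-alphabet argument would count types, and it is the technical heart of the proof. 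An equivalent route replaces this covering bound by a channel-resolvability/soft-covering estimate showing that the codebook-induced joint law of $(V^n,F_n(V^n))$ converges to $P_{VW}^{\otimes n}$ in total variation; it carries the same technical burden.
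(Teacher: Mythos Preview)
Your proof is correct and follows essentially the same route as the paper: condition out $U$ via the Markov property (your $g_n$ is the paper's $\phi_n$), threshold with Markov's inequality to obtain the good set $A_n$ (the paper's $\cS_n$), and then run a random-codebook covering argument controlled by the information density. The only cosmetic difference is that the paper packages your Step~2 by invoking Gallager's Lemma~9.3.1 for the covering bound and Gray's mean ergodic theorem for the information-density typicality, whereas you re-derive these directly; note also that since $\E_{P_{VW}}|\imath(V,W)|<\infty$ (the negative part is bounded by $1/e$), the ordinary WLLN suffices and no truncation is actually needed.
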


\begin{proof} The proof is very similar to Wyner's proof for finite alphabets \cite{Wyn75}. Fix any $n$ and define a function $\phi_n : \cV^n \times \cW^n \to [0,1]$ by
\begin{align*}
\phi_n(v^n,w^n) &\deq \E\Big[ \psi_n(U^n,W^n) \Big| V^v = v^n, W^n = w^n \Big] \\
&= \int_{\cU^n} \psi_n(u^n,w^n) P_{U^n|V^n,W^n}(du^n|v^n,w^n).
\end{align*}
Owing to the Markov chain condition, we can write
\begin{align}\label{eq:PB_Markov}
\phi_n(v^n,w^n) = \int_{\cU^n} \psi_n(u^n,w^n) P_{U^n|V^n}(du^n|v^n).
\end{align}
Letting $\delta_n \deq \E \psi_n(U^n,W^n)$, we define the set
\begin{align*}
\cS_n \deq \Big\{ (v^n,w^n) \in \cV^n \times \cW^n : \phi_n(v^n,w^n) \le \sqrt{\delta_n} \Big\}.
\end{align*}
Then by the Markov inequality we have
\begin{align*}
\Pr \left( (V^n, W^n) \not\in\cS_n \right) \le \frac{\E \phi_n(V^n,W^n)}{\sqrt{\delta_n}} = \sqrt{\delta_n}.
\end{align*}
Consider an arbitrary measurable mapping $G : \cV^n \to \{ w^n(1),\ldots,w^n(M)\} \subset \cW^n$ for some $M < \infty$. Then, defining the set
\begin{align*}
\tilde{\cS}_n \deq \{ v^n \in \cV^n : (v^n,G(v^n)) \in \cS_n\},
\end{align*}
we can write
\begin{align*}
&\E \psi_n(U^n, G(V^n)) \\
&= \E \big[ \E [\psi_n(U^n,G(V^n)) | V^n] \big] \\
&\stackrel{{\rm (a)}}{=} \E \phi_n (V^n, G(V^n))\\
&\stackrel{{\rm (b)}}{\le}  \Pr ( \tilde{\cS}^c_n )  + \int_{\tilde{\cS}_n} \phi_n(v^n,G(v^n)) P_{V^n}(dv^n), 
\end{align*}
where (a) is due to \eqref{eq:PB_Markov}, while (b) uses the fact that $0 \le \phi_n(\cdot,\cdot) \le 1$. Moreover,
\begin{align*}
\int_{\tilde{\cS}_n} \phi_n(v^n,G(v^n)) P_{V^n}(dv^n) 
&\le \sqrt{\delta_n}.
\end{align*}
Hence,
\begin{align*}
\E \psi_n (U^n, G(V^n)) \le \Pr ( \tilde{\cS}^c_n ) + \sqrt{\delta_n}.
\end{align*}
Now we can use Lemma~9.3.1 in \cite{Gal68} to show that, given $\cS_n$, $M$, and an arbitrary $R > 0$, there exist a set $\{ w^n(1),\ldots,w^n(M)\} \subset \cW^n$ and a mapping $G_n : \cV^n \to \{ w^n(1),\ldots,w^n(M) \}$, such that
\begin{align*}
& \Pr \left( (V^n,G_n(V^n)) \not\in \cS_n \right) \le \Pr (  \cS^c_n ) \nonumber\\
& \quad \quad + \Pr \left(i(V^n,W^n) > nR \right)  + \exp \left( - M2^{-Rn} \right),
\end{align*}
where
\begin{align*}
i(v^n,w^n) \deq \log \frac{dP_{V^n,W^n}}{d(P_{V^n} \otimes P_{W^n})}(v^n,w^n)
\end{align*}
is the information density \cite{Gra90a}. Letting $M = 2^{n(I(V;W) + \eps)}$ and $R = I(V;W) + \eps/2$ and using the corresponding mapping $G_n$, we get
\begin{align*}
& \E \psi_n(U^n,G_n(V^n))  \le 2\sqrt{\delta_n} \nonumber\\
& \quad \quad  + \exp(-2^{n\eps/2}) + \Pr \left( i(V^n,W^n) > nR \right).
\end{align*}
Since $\E \psi_n (U^n,W^n) = \delta_n \to 0$ as $n \to \infty$, the first term goes to zero as $n \to \infty$. The second term likewise goes to 0 since $\eps > 0$. The third term goes to zero owing to the mean ergodic theorem for information densities \cite[Theorem~8.5.1]{Gra90a}. Choosing $n_0$ large enough so that the right-hand side of the above inequality is less than $\eps$ finishes the proof.
\end{proof}

\section{Time mixing}
\label{app:time_mixing}

\setcounter{equation}{0}

Our discussion of the time mixing technique essentially follows \cite[p.~4200]{CufPerCov09}, except that care must be taken due to the fact that we are working with general alphabets here.

Fix a space $\cU$. Let $U^n = (U_1,\ldots,U_n)$ be a random $n$-tuple taking values in $\cU^n$ according to some law $P_{U^n}$. Let $T$ be a random variable uniformly distributed over the set $[n]$ independently of $U^n$. Consider the random variable $U_T \in \cU$, i.e.,~the value of the $T$th coordinate of $U^n$. We will use two facts pertaining to this construction.

First, we note that $U_T$ and $T$ need not be independent, even though $U^n$ and $T$ are. One exception is when $U^n$ is an i.i.d.\ tuple:

\begin{fact} If $U^n$ is an i.i.d.\ tuple with common marginal $P_U$, then $U_T$ is independent of $T$
and has the same law as $U_1$, i.e., $P_U$.\end{fact}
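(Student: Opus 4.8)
The plan is to prove Fact~1 directly from the definition of $U_T$ together with the independence of $T$ and $U^n$ and the i.i.d.\ assumption. The cleanest route is to compute the joint law of $(U_T, T)$ on $\cU \times [n]$ and show it factors as $P_U \otimes (\text{uniform on } [n])$. First I would fix an arbitrary $A \in \cB_\cU$ and $t \in [n]$ and write, using the law of total probability over the value of $T$ and the independence of $T$ from $U^n$,
\begin{align*}
\Pr(U_T \in A, T = t) &= \Pr(U_t \in A, T = t) = \Pr(T = t)\,\Pr(U_t \in A) = \frac{1}{n}\,P_U(A),
\end{align*}
where the last equality uses that $U_t$ has marginal law $P_U$ since $U^n$ is i.i.d. Summing over $t \in [n]$ gives $\Pr(U_T \in A) = P_U(A)$, so $U_T$ has law $P_U$; and since $\Pr(U_T \in A, T = t) = P_U(A)\cdot\frac{1}{n} = \Pr(U_T \in A)\,\Pr(T = t)$ for every measurable $A$ and every $t$, the random variables $U_T$ and $T$ are independent.

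A small technical point worth addressing is the measurability of $(\omega) \mapsto U_{T(\omega)}(\omega)$, i.e.\ that $U_T$ is a bona fide $\cU$-valued random variable; this follows because $U_T = \sum_{t=1}^n U_t \, 1_{\{T = t\}}$ is a finite measurable combination, or equivalently because $\{U_T \in A\} = \bigcup_{t=1}^n (\{U_t \in A\} \cap \{T = t\})$ is measurable for each $A \in \cB_\cU$. Since $\cU$ is standard Borel, its $\sigma$-algebra is countably generated, so it suffices to check the factorization identity on a generating algebra, but in fact the computation above already holds for all $A \in \cB_\cU$ with no extra work, so no monotone-class argument is actually needed here.

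I do not anticipate any genuine obstacle: the statement is elementary once the events $\{T = t\}$ are used to decouple the index from the tuple. The only place requiring the slightest care is making sure the independence of $T$ and $U^n$ is invoked correctly when passing from $\Pr(U_T \in A, T = t)$ to $\Pr(U_t \in A)\Pr(T=t)$ — specifically, one uses that the event $\{U_t \in A\}$ is $\sigma(U^n)$-measurable and hence independent of $\{T = t\}$. Everything else is bookkeeping.
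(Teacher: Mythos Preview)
Your proof is correct and follows essentially the same approach as the paper: both compute the joint law of $(U_T,T)$ on sets of the form $A \times \{t\}$, use independence of $T$ and $U^n$ together with the identical marginals to obtain the product form $P_U(A)\cdot \Pr(T=t)$, and read off both the marginal law of $U_T$ and its independence from $T$. Your added remarks on measurability are fine but not strictly needed here.
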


\begin{IEEEproof} For any $i \in [n]$ and any $A \in \cB_\cU$,
	\begin{align*}
		P_{U_T,T}(A \times \{i\}) &= \Pr(T=i) P_{U_T|T}(A|i) \\
		&= \Pr(T=i) P_{U_i}(A) \\
		&= \Pr(T=i) P_{U}(A) \\
		&= P_T(\{i\}) P_{U}(A).
	\end{align*}
	Hence, $P_{U_T|T}(A|i) = P_U(A)$, regardless of $i$.
\end{IEEEproof}

Second, let us consider the empirical distribution $\emp_{U^n}$. Since $\cU$ is a Borel space, $\cP(\cU)$ is a (complete separable) metric space under any metric that metrizes the weak convergence of probability laws, so we can equip it with its Borel $\sigma$-algebra. Then $\emp_{U^n}$ is a $\cP(\cU)$-valued random variable, whose expectation $\E \emp_{U^n}$ is given by
\begin{align*}
	[\E \emp_{U^n}](A) \deq \frac{1}{n} \sum^n_{i=1} P_{U_i}(A), \qquad \forall A \in \cB_\cU.
\end{align*}
It is not hard to check that $\E \emp_{U^n}$ satisfies the Kolmogorov axioms and is itself an element of $\cP(\cU)$. In particular:

\begin{fact} Consider the empirical distribution $\emp_{U^n}$. Then
	\begin{align}\label{eq:mixing_expect}
		\E \emp_{U^n} = P_{U_T},
	\end{align}
	where $P_{U_T} \in \cP(\cU)$ is the law of $U_T$.
\end{fact}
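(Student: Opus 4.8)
The plan is to establish the identity $\E \emp_{U^n} = P_{U_T}$ by showing that both sides are elements of $\cP(\cU)$ that agree on every Borel set $A \in \cB_\cU$, and then invoking the fact that a probability measure on a standard Borel space is determined by its values on measurable sets. Since Fact~2 is essentially a bookkeeping identity, the work is entirely in carefully unwinding the two definitions and noting that the independence of $T$ and $U^n$ makes the computation go through.

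First I would fix an arbitrary $A \in \cB_\cU$ and compute $[\E \emp_{U^n}](A)$. By definition, $\emp_{U^n} = \frac{1}{n}\sum_{i=1}^n \delta_{U_i}$, so $\emp_{U^n}(A) = \frac{1}{n}\sum_{i=1}^n 1_{\{U_i \in A\}}$, and taking expectations gives
\begin{align}
[\E \emp_{U^n}](A) = \frac{1}{n}\sum_{i=1}^n \Pr(U_i \in A) = \frac{1}{n}\sum_{i=1}^n P_{U_i}(A),
\end{align}
which matches the displayed formula for $\E \emp_{U^n}$ in the text (and one checks the Kolmogorov axioms — nonnegativity, $[\E\emp_{U^n}](\cU)=1$, countable additivity by monotone convergence — so that $\E\emp_{U^n}\in\cP(\cU)$ as claimed). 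Next I would compute $P_{U_T}(A)$ by conditioning on $T$: since $T$ is uniform on $[n]$ and independent of $U^n$,
\begin{align}
P_{U_T}(A) = \sum_{i=1}^n \Pr(T = i)\, \Pr(U_T \in A \mid T = i) = \sum_{i=1}^n \frac{1}{n}\, \Pr(U_i \in A) = \frac{1}{n}\sum_{i=1}^n P_{U_i}(A),
\end{align}
where the middle equality uses that, conditioned on $T = i$, the independence of $T$ and $U^n$ gives $U_T = U_i$ in distribution. Comparing the two displays shows $[\E\emp_{U^n}](A) = P_{U_T}(A)$ for every $A \in \cB_\cU$, hence $\E\emp_{U^n} = P_{U_T}$ as measures.

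The only genuinely delicate point — and the one I would flag as the main obstacle — is measurability: one must make sense of $\emp_{U^n}$ as a $\cP(\cU)$-valued random variable and of its expectation. This is handled by equipping $\cP(\cU)$ with the Borel $\sigma$-algebra of any metric metrizing weak convergence (as the text already notes, $\cP(\cU)$ is then itself a standard Borel space), observing that $u^n \mapsto \emp_{u^n}$ is measurable because each evaluation map $\mu \mapsto \mu(A)$ is measurable and $U^n$ is a measurable tuple, and defining $\E\emp_{U^n}$ setwise as above rather than as a Bochner integral. With this interpretation fixed, everything reduces to the elementary two computations sketched above, so no further technical machinery is needed.
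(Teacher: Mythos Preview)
Your proposal is correct and follows essentially the same approach as the paper: both verify the identity setwise by showing that $[\E\emp_{U^n}](A)$ and $P_{U_T}(A)$ each equal $\frac{1}{n}\sum_{i=1}^n P_{U_i}(A)$ for every $A\in\cB_\cU$, using the uniformity of $T$ and the independence of $T$ and $U^n$. The only cosmetic difference is that the paper runs the computation as a single chain (conditioning on $U^n$ via the tower property), whereas you compute each side separately (conditioning on $T$); your additional remarks on measurability simply reiterate what the paper already notes in the paragraph preceding the Fact.
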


\begin{IEEEproof} For any $A \in \cB_\cU$,
	\begin{align*}
		[\E\emp_{U^n}](A) &= \frac{1}{n}\sum^n_{i=1} P_{U_i}(A) \\
		&= \E \left[\sum^n_{i=1} \Pr(T=i) 1_{\{U_i \in A\}}\right] \\
		&= \E \left[ \E \big[1_{\{U_T \in A\}}\big| U^n \big]\right] \\
		&= \E \left[ 1_{\{U_T \in A\}}\right] \\
		&= P_{U_T}(A).
	\end{align*}
	Since $A$ is arbitrary, \eqref{eq:mixing_expect} indeed holds.
\end{IEEEproof}

\section{Proof of Lemma~\ref{lm:WZ_approx}}
\label{app:WZ_approx}
\setcounter{equation}{0}

The proof is very similar to the proof of Lemma~5.3 of Wyner \cite{Wyn78}. In particular, only part (a) requires modification. Parts (b) and (c) follow immediately, just as in \cite{Wyn78}.

Since $Q \in \cE(\Delta)$, there exists a function $g : \cZ \times \cU \to \cY$, such that, with $W = g(Z,U)$,
\begin{align}\label{eq:aux_map}
	\big\| Q_{XW} - P_{XY} \big\|_\cF \le \Delta.
\end{align}
Secondly, owing to the smoothness assumption \eqref{eq:quantizer}, for any $\delta_1 > 0$ one can find a quantizer $q : \cY \to \{\wh{y}_j\}^N_{j=1} \subset \cY$, $N < \infty$, such that
\begin{align}\label{eq:quant}
	\big\| Q_{Xq(W)} - Q_{XW} \big\|_\cF \le \delta_1.
\end{align}
Let $g_0 \deq q \circ g$, and define the sets
\begin{align*}
	C_j \deq \left\{ (z,u) \in \cY \times \cU : g_0(z,u) = \wh{y}_j \right\}, \quad 1 \le j \le N.
\end{align*}
Lemma~5.4 in \cite{Wyn78} can be used to show that, for an arbitrary $\delta_2 > 0$, there exists a collection of disjoint sets $\{S_j\}^N_{j=1} \subset \cB_\cZ \otimes \cB_\cU$, where each $S_j$ is a finite union of rectangles, and
\begin{align}\label{eq:set_approx}
	Q_{ZU}(S_j \triangle C_j) \le \delta_2, \quad 1 \le j \le N.
\end{align}
Now define $g_1 : \cZ \times \cU \to \cY$ by
\begin{align*}
	g_1(y,u) \deq
	\begin{cases}
		\wh{y}_j, & \text{if } (z,u) \in S_j \\
		\wh{y}_1, & \text{if } (z,u) \not\in \bigcup^N_{j=1}S_j.
	\end{cases}
\end{align*}
Define also the set $E \deq \bigcup^N_{j=1} (C_j \cap S_j)$ and note that $g_1 = g_0$ on $E$. Then
\begin{align}
	&\E [f(X,g_1(Z,U))] \nonumber\\
	&= \E[1_E f(X,g_0(Z,U))] + \E[1_{E^c} f(X,g_1(Z,U))] \nonumber\\
	&\le \E[ f(X,g_0(Z,U))] + Q_{ZU}(E^c) \nonumber \\
	&= \E[f(X,q(W))] + Q_{ZU}(E^c) \nonumber\\
	&\le \E[f(X,W)] + \delta_1 + Q_{ZU}(E^c). \label{eq:WZ_approx_1}
\end{align}
Similarly,
\begin{align}
	&\E[f(X,W)] \nonumber\\
	& \le \E[f(X,q(W))] + \delta_1 \nonumber \\
	&= \E[1_E f(X,q(W))] + \E[1_{E^c} f(X,q(W))] + \delta_1 \nonumber \\
	&= \E[1_E f(X,g_1(Z,U))] + \E[1_{E^c} f(X,q(W))] + \delta_1 \nonumber \\
	&\le \E[f(X,g_1(Z,U))] + Q_{ZU}(E^c) + \delta_1. \label{eq:WZ_approx_2}
\end{align}
In both cases we have used the fact that $f$ is bounded between $0$ and $1$, as well as  \eqref{eq:quant}. Moreover, using the fact that $\{C_j\}$ is a disjoint partition of $\cZ \times \cU$, as well as \eqref{eq:set_approx}, we can write
\begin{align*}
	Q_{ZU}(E^c) \le
	\sum^N_{j=1} Q_{ZU}(S_j \triangle C_j) \le N\delta_2.
\end{align*}
Combining \eqref{eq:aux_map}, \eqref{eq:WZ_approx_1} and \eqref{eq:WZ_approx_2}, we get
\begin{align*}
	\big\| Q_{XW_1} - Q_{XW} \big\|_\cF \le \delta_1 + N \delta_2,
\end{align*}
where $W_1 = g_1(Z,U)$. Now, given $\delta > 0$, first choose $\delta_1 = \delta/2$. This fixes $N = N(\delta)$. Then choose $\delta_2$ so that $N\delta_2 \le \delta/2$. This proves part (a); parts (b) and (c) follow exactly as in \cite{Wyn78}.

\end{appendices}

\section*{Acknowledgment}

The author would like to thank Todd Coleman and Serdar Y\"uksel for their careful reading of the manuscript and for making a number of useful suggestions that have improved the presentation. Insightful comments by the Associate Editor Yossef Steinberg and two anonymous referees are also gratefully acknowledged. In particular, the author is indebted to one of the referees for pointing out a flaw in the original version of the problem formulation in Section~\ref{ssec:KS}.

\bibliographystyle{IEEEtran}
\bibliography{GC_typical_final.bbl}

\begin{IEEEbiographynophoto}{Maxim Raginsky} (S'99--M'00) received the B.S. and M.S. degrees in 2000 and the Ph.D. degree in 2002 from Northwestern University, Evanston, IL, all in electrical engineering. He has held research positions with Northwestern, the University of Illinois at Urbana-Champaign (where he was a Beckman Foundation Fellow from 2004 to 2007), and Duke University. In 2012, he has returned to UIUC, where he is currently an Assistant Professor with the Department of Electrical and Computer Engineering and the Coordinated Science Laboratory. His research interests lie at the intersection of information theory, machine learning, and control.\end{IEEEbiographynophoto}

\end{document}